\newcommand{\increase}[2]{\ensuremath{#1 \, {\overset{+}{\gets}}\,
    #2}} 
\newcommand{\decrease}[2]{\ensuremath{#1 \, {\overset{-}{\gets}}\, #2}} 
 \ifpdf\setlength{\pdfpagewidth}{8.5in}\setlength{\pdfpageheight}{11in}\fi
\newtheorem{thm}{Theorem}[section]
\newtheorem{theorem}{Theorem}[section]
\newtheorem{lemma}[thm]{Lemma}
\newcommand{\ignore}[1]{} 
\newcommand{\notinproc}[1]{#1}
\newcommand{\onlyinproc}[1]{}
\newcommand{\notinarxiv}[1]{}
\newcommand{\onlyinarxiv}[1]{#1}
\newcommand\E{\textsf{E}}
\newcommand{\kth}{\text{k}^{\text{th}}}
\newcommand{\ADS}{\mathop{\rm ADS}}
\newcommand{\INF}{\mathop{\sf Inf}}
\newcommand{\cADS}{\mathop{\rm cADS}}
\newcommand{\cascade}[1]{G^{(#1)}}
\newcommand{\lstar}{\ensuremath{\mbox{L}^*}}
\newcommand{\instance}[1]{\textsf{#1}}
\newcommand{\skim}{{\sc SKIM}}
\newcommand{\alphaskim}{{\sc $\alpha$-SKIM}}
\newcommand{\askim}{{\sc $\alpha$-SKIM}}
\newcommand{\tskim}{{\sc $T$-SKIM}}
\DeclareMathOperator*{\argmax}{argmax}
\newcommand{\comment}[1]{\footnote{#1}}
\definecolor{darkgreen}{RGB}{0,100,0}
\definecolor{orange}{RGB}{255,80,0}
\newcommand{\edith}[1]{\comment{\textcolor{purple}{EC}:#1}}
\newcommand{\daniel}[1]{\comment{\textcolor{darkgreen}{DD}:#1}}
\newcommand{\Xcomment}[1]{}
\title{Distance-Based Influence in Networks: \\ Computation and
  Maximization\onlyinproc{ \\{\large KDD 2016 submission \#119}}}
\author{\alignauthor Edith Cohen \\
{\large Tel Aviv University}\\
{\large Google Research}\\
{\large edith@cohenwang.com}
\alignauthor Daniel Delling\\
{\large Sunnyvale, CA, USA}\\
{\large daniel.delling@gmail.com}
\alignauthor Thomas Pajor \\
{\large Microsoft Research}\\
{\large USA}\\
{\large thomas@tpajor.com}
\alignauthor Renato F. Werneck\\
{\large San Francisco, CA, USA}\\
{\large rwerneck@gmail.com}
}
\author{
\begin{tabular}{cccc}
Edith Cohen  & Daniel Delling & Thomas Pajor  & Renato F. Werneck \\
{\large Tel Aviv University} & {\large Sunnyvale} & {\large  Cupertino} &   {\large San Francisco} \\ 
{\large Google Research} &  {\large USA} & {\large USA} &  {\large USA} \\
{\large edith@cohenwang.com} & {\large daniel.delling@gmail.com} &
  {\large  thomas@tpajor.com}  & {\large rwerneck@acm.org}
\end{tabular}
}
\author{
\alignauthor Edith Cohen,  Daniel Delling,  Thomas
Pajor, Renato F. Werneck\\
       { \email{edith@cohenwang.com} (Tel Aviv University),
         \email{daniel.delling@gmail.com} (Sunnyvale, USA),
         \email{tpajor@microsoft.com} (Microsoft Research),
         \email{rwerneck@cs.princeton.edu} (San Francisco, USA)}
}
\begin{document}
\maketitle

\begin{abstract}

  A premise at a heart of network analysis is that entities in a
network derive utilities from their connections.  The {\em
influence} of a seed set $S$ of nodes is defined as the sum over
nodes $u$ of the {\em utility} of $S$ to $u$.  {\em Distance-based}
utility, which is a decreasing function of the distance from $S$ to
$u$, was explored in several successful research threads from social
network analysis and economics: Network formation games [Bloch and
Jackson 2007], Reachability-based influence [Richardson and Domingos
2002; Kempe et al. 2003]; ``threshold'' influence [Gomez-Rodriguez
et al. 2011]; and {\em closeness centrality} [Bavelas 1948].

We formulate a model that unifies and extends this previous work and
address the two fundamental computational problems in this domain:
{\em Influence oracles} and {\em influence maximization} (IM).  An
oracle performs some preprocessing, after which influence queries for
arbitrary seed sets can be efficiently computed.  With IM, we seek a
set of nodes of a given size with maximum influence.  Since the IM
problem is computationally hard, we instead seek a {\em greedy
  sequence} of nodes, with each prefix having influence that is at
least $1-1/e$ of that of the optimal seed set of the same size. We
present the first highly scalable algorithms for both problems,
providing statistical guarantees on approximation quality and
near-linear worst-case bounds on the computation. We perform an
experimental evaluation which demonstrates the effectiveness of our
designs on networks with hundreds of millions of edges.

\ignore{
The spread of contagion through a network is a fundamental
process that applies in diverse domains.  Effective diffusion models need to 
capture the basic properties of the process,  be flexible so they
can be fitted to different applications, and algorithmically scale to
very large networks.

We consider  networks  where edges
have associated (probabilistic) lengths which model transmission times.
We propose a natural measure of the {\em distance-based influence} of a seed set of
nodes, which discounts its influence on each reachable node by the
distance to the node.

Distance-Based influence extends and unifies several successful and well-studied measures used 
in social network analysis:  ``binary'' influence [Richardson 
and Domingos 2002; Kempe 
et al. 2003], where all reachable nodes are equally influenced;
``threshold''  influence [Gomez-Rodriguez et al. 2011], where only 
nodes reachable within a specified time are influenced; and 
{\em closeness centrality}, which is the distance-based influence of a single 
node in a network with fixed edge lengths.

We design the first highly scalable algorithms
 for distance-based influence computation and maximization. Our design allows the
discounting to be specified by general decay functions, providing
modeling flexibility.
For threshold influence, we improve scalability by orders of
 magnitude over the state-of-the-art.
 Moreover, our algorithms provide statistical
 guarantees on the approximation quality that match those attainable
in the much more constrained binary setting.
}
\end{abstract}

\section{introduction}



 Structural notions of the {\em influence} of a set of entities in a
 network which are based on the
{\em utility} an entity derives from its connectivity to others,  are central
 to network analysis and were studied in the context of
 social and economic models
 \cite{Bavelas:HumanOrg1948,Sabidussi:psychometrika1966,Freeman:sn1979,KKT:KDD2003,BlochJackson:2007,CoKa:jcss07,JacksonNetworks:Book2010,Opsahl:2010}
 with applications that include ranking, covering, clustering, and understanding diffusion and
  network formation games.  
More formally, influence is typically defined in terms of
utilities $u_{ij}$ between ordered pairs of entities.
 The influence of a single entity $i$, also known as its
{\em centrality,} is the sum of utilities $\INF(i) = \sum_{j} u_{ij}$ over 
entities $j$. The influence of a set $S$ of entities is the sum over
entities of the highest utility match from $S$:
$$\INF(S) = \sum_j \max_{i\in S} u_{ij}\ .$$

  One of the simplest and more popular definitions of influence relies on
{\em reachability-based} utility
\cite{GLM:marketing2001,RichardsonDomingos:KDD2002,KKT:KDD2003}.
The network here is a directed graph, where entities correspond to
nodes. We have $u_{ij}=1$ when the node $j$ is
reachable from node $i$.  Therefore, the influence of $S$ is the number of
entities reachable from $S$.
A powerful enhancement of this model is to 
allow for multiple  {\em
  instances}, where each instance is a set of directed
edges, or for a distribution over
instances, and accordingly, define $u_{ij}$ as the respective average or
expectation.
The popular Independent Cascade (IC)
model of Kempe et al.~\cite{KKT:KDD2003} uses a distribution defined
by a graph with independent inclusion
probabilities for edges.

More expressive utility is based on
 shortest-path distances
 \cite{Bavelas:HumanOrg1948,Sabidussi:psychometrika1966,Freeman:sn1979,KKT:KDD2003,BlochJackson:2007,CoKa:jcss07,Opsahl:2010}. Specifically, 
the utility is $u_{ij} = \alpha(d_{ij})$, where $\alpha(d) \geq 0$ is a non-increasing function
\cite{Rosenblatt:stats1956,CoSt:pods03f,BlochJackson:2007,CoKa:jcss07} 
applied to the  respective shortest-path distance.
Reachability-based utility falls out as a special case, using 
$\alpha(x)=1$ for finite $x$ and $\alpha(+\infty)=0$.  
Popular kernels include 
exponential decay $\alpha(x)=\exp(-\lambda x)$,
polynomial decay $\alpha(x)=1/\mbox{poly}(x)$, Gaussian 
$\alpha(x)=\exp(-\lambda x^2)$, and threshold, which is obtained using $\alpha(x)=1$ when 
$x\leq T$ and $\alpha(x)=0$ otherwise.  
This variety, used in practice, demonstrates the value of this modeling 
flexibility.

A distance-based model for information diffusion in social networks was recently proposed by
Gomez-Rodriguez et
al.~\cite{Gomez-RodriguezBS:ICML2011}. In their model, the seed set $S$
corresponds to the ``infected'' nodes and edge lengths correspond
to propagation time.  The shortest paths distance from $S$ to a node
$v$ corresponds to the elapsed time until $v$ is
``infected.''   The decay function models the amount by which slower propagaing is
less valuable.
Again, the modelling power is enhanced by working with
multiple instances (each instance is a set of directed edges with
lengths), or with a distribution over such instances, and 
defining the utility as the average of $\alpha(d_{ij})$ over instances
or as the expectation $u_{ij} = \E[\alpha(d_{ij})]\ .$
In particular, a natural model associates independent 
randomized edge lengths (REL) with live
edges~\cite{Gomez-RodriguezBS:ICML2011,ACKP:KDD2013,CDFGGW:COSN2013,DSGZ:nips2013},
using exponential~\cite{Gomez-RodriguezBS:ICML2011,ACKP:KDD2013,CDFGGW:COSN2013} or
Weibull~\cite{DSGZ:nips2013} distributions.

We provide an intuitive explanation to the power of randomization to improve the quality of
reachability-based and distance-based utility measures.
We expect the utility $u_{ij}$ to be higher when there is a short path
and also to when there are more paths.  The deterministic shortest-path
distance does not reflect paths multiplicity, but with REL, a pair
connected by more paths will have a shorter expected distance  than
another pair, even when distances according to expected edge
lengths are equal~\cite{CDFGGW:COSN2013}. 
In addition, randomization, even as introduction of random noise,
 is a powerful tool in learning, as it 
reduces the sensitivity of the results to 
insignificant variations in the input and overfitting the data.

We note here that the work of Gomez-Rodriguez et al. and subsequent
  work on distance-based diffusion~\cite{Gomez-RodriguezBS:ICML2011,DSGZ:nips2013} was focused
  specifically on
  threshold decay functions, where for a threshold parameter $T$,
  $u_{ij}=1$ only when the distance is at most $T$.
Distance-based utility with smooth decay functions, which we study here, naturally occur
  in the physical world and was extensively studied also in the
  context of data analysis \cite{Rosenblatt:stats1956} and 
networks \cite{CoKa:jcss07,BlochJackson:2007,Opsahl:2010,JacksonNetworks:Book2010}.
In particular, distance-based influence generalizes the distance-decaying variant 
\cite{CoKa:jcss07,Dangalchev:2006,Opsahl:2010,BoldiVigna:IM2014,ECohenADS:PODS2014} of 
closeness centrality \cite{Bavelas:HumanOrg1948}, which was studied with 
exponential, harmonic, threshold, and general decay functions.


The two fundamental algorithmic problems in applications of influence are
{\em influence computation} and {\em influence 
    maximization} (IM). 

Influence computation is the problem of computing the influence of a
specified seed set $S$ of nodes.  This can be done using multiple
single-source shortest-paths computations from the seed set, but the
computation does not scale well when there are many queries on very
large networks.  Cohen et al.~for reachability-based influence~\cite{binaryinfluence:CIKM2014}
and Du et al.~for threshold influence~\cite{DSGZ:nips2013} designed {\em
 influence oracles} which preprocess the input so that influence queries 
$\INF(\mathcal{G},S)$ for a specified 
seed set $S$ can be approximated quickly.  
In both cases, a sketch, based on~\cite{ECohen6f}, is computed for each node
so that the influence of a seed set $S$ can be estimated from the 
sketches of the nodes in $S$.  
\notinproc{
For general distance-based influence, 
we can consider an oracle designed for a pre-specified function 
$\alpha$ (say reachability-based or threshold), or a more powerful oracle which
allows $\alpha$ to
be specified at query time.}

Influence maximization is the problem of 
finding a seed set $S\subset V$ with maximum 
 influence, where $|S| = s$ is given.  
%
Since reachability-based and threshold influence~\cite{KKT:KDD2003,Gomez-RodriguezBS:ICML2011} are special cases,
we know that distance-based influence maximization with general $\alpha$ 
is NP-complete and hard to approximate to anything better than $1-(1-1/s)^s$ of the optimum
  for a seed set of size $s$ (the hardness result  is asymptotic in
  $s$)~\cite{feige98}.
Fortunately, from
monotonicity and submodularity of these influence
functions~\cite{KKT:KDD2003,Gomez-RodriguezBS:ICML2011}, we 
obtain that the greedy algorithm ({\sc Greedy}), which
iteratively adds to the seed set the node with maximum marginal
influence, is guaranteed to provide a solution that is at least
$1-(1-1/s)^s > 1-1/e$ of
the optimum \cite{submodularGreedy:1978}.  This (worst-case) guarantee holds for every prefix size of the
sequence of seeds reported, which means that the {\sc Greedy} sequence
approximates the Pareto
front of the  trade-off of seed
 set size versus its influence.  The Pareto front provides added value
 since it characterizes the
 influence coverage of the network and can be used to find
bi-criteria sweet spots between the size of the seed set and its coverage.

In terms of solution quality, also in practice, {\sc Greedy} had been the gold standard
for submodular maximization.  Scalability, however, remains an
issue for reachability-based influence even with various optimizations
\cite{Leskovec:KDD2007,CWY:KDD2009}, and even when working with a single
instance.
As a result, extensive research work on reachability-based influence maximization 
proposes scalable heuristics \cite{JHC:ICDM2012},
scalability with guarantees only for a small number of seeds
\cite{BBCL:SODA2014,TXS:sigmod2014}, and, more recently, \skim, which
computes a full approximate {\sc Greedy}
sequence~\cite{binaryinfluence:CIKM2014} while
scaling nearly linearly with input size.
For threshold influence, existing maximization
algorithms are by Du et al.~\cite{DSGZ:nips2013}  based on
sketches~\cite{ECohen6f} and by \cite{TSX:sigmod2015} based on an
extension of the reachability-based algorithm of
\cite{TXS:sigmod2014}.  
These designs, when using near-linear computation, can only provide
approximation guarantees for a small number of seeds, even when edge
lengths are not randomized.\notinproc{\footnote{Note that \skim, which is the design we build on here, provides strong
  worst-case statistical guarantees on both accuracy and running time,
for producing a full approximate  greedy sequence of seeds.  The
computation is proportional to the size of the instances and is
near-linear when the number of instances is small. Other existing 
algorithms, even related ones also based on reverse searches
  \cite{ECohen6f,DSGZ:nips2013 ,BBCL:SODA2014,TXS:sigmod2014}, do
not compute a full sequence in near-linear time: The
  running time grows with the number of seeds even on a single
  instance, which is the same as an IC  model with all probabilities being $0$ or $1$.
The inherent issue is the quality of the approximate greedy
  solution.  The performance of an approximate greedy algorithm on a sample 
deteriorates quickly with the number of seeds, so a very large sample
size, $O(sn)$ in total, may be needed when we are interested in good
approximation with respect to all seed sets of size $s$, which is
needed in order to correctly identify the maximum one.
The power and novelty of \skim\ comes from 
efficiently maintaining a {\em residual problem}, which allows for
samples which provide relative error guarantees on {\em marginal contributions} of  new seed
candidates.  We also note that \skim\ runs in  worst-case near-linear
time on inputs specified as a set of (arbitrary) instances.  It is not known
if near-linear time algorithms that compute an approximate greedy
sequence exist for the IC model.
}}

\medskip
\noindent
{\bf  Contributions.}
Our contributions are as follows.
Our distance-based influence model is presented in
Section \ref{model:sec}. 
We define exact and
approximate  {\sc Greedy} sequences and establish approximation
guarantees of both.  We then formulate a representation of a
{\em residual problem} with respect to a seed set $S$, which
facilitates the greedy computation of the next seed.  Finally, we establish 
a probabilistic bound on residual updates, which magically holds for
the approximate but not for the exact greedy.  This bound is a
critical component in obtaining near-linear computation of a full
approximate greedy sequence.

In Section~\ref{thresh:sec} we consider the threshold influence model~\cite{Gomez-RodriguezBS:ICML2011,DSGZ:nips2013}.  We extend the  (approximate) influence oracles and the 
\skim\ reachability-based influence maximization algorithm \cite{binaryinfluence:CIKM2014},
to obtain oracles and  \tskim\ for threshold influence. 
The extension itself replaces reachability searches with pruned
Dijkstra computations.
The statistical guarantees on approximation quality are inherited from
\skim.  The worst-case 
running time analysis, establishing that \tskim\ computes the full 
approximate greedy sequence in near-linear time,  required our probabilistic bound on residual
updates.
\ignore{
We show that our algorithms scale much better  than the state-of-the-art threshold influence algorithms 
of Du et al.~\cite{DSGZ:nips2013}.  In particular, a much smaller
storage (sketch size) is needed for obtaining an oracle with the same
accuracy guarantees, and IM is orders of magnitude faster.
}

In Section~\ref{timedQ:sec}, we present {\em distance-based influence
oracles}, which take as input a seed
set $S$ and ({\em any}) decay function $\alpha$, which can be specified
at query time.  As explained earlier, many different decay/kernel
functions are used extensively in practice, which makes the
flexibility of the oracle to handle arbitrary $\alpha$ valuable.
Our oracle computes a novel sketch for each node;  the 
{\em combined All-Distances sketch} ($\cADS$), which generalizes
All-Distances Sketches
(ADS)~\cite{ECohen6f,CoKa:jcss07,bottomk07:ds,ECohenADS:PODS2014,BoldiVignaHyperball:arxiv2014}, used for
closeness centrality computation,  to 
multiple instances or probabilistic models.  These per-node sketches
have expected size at most $k\ln (n\min\{k,\ell\})$ (with good concentration), where $n$ is the number of nodes, $\ell$ is the number of instances, and $k$ is a sketch parameter that determines a trade-off between information 
and the amounts of computation and space required.
We estimate the distance-based influence of a seed set $S$ from the sketches of the nodes in $S$. Our estimator
uses  HIP probabilities~\cite{ECohenADS:PODS2014} with the 
L$^*$ estimator~\cite{sorder:PODC2014}, which optimally uses the
information in the sketches, and has worst-case
coefficient of variation (CV) $\leq 1/\sqrt{2k-2}$.

In Section~\ref{timedIM:sec} we present \alphaskim, the first scalable
influence maximization algorithm that applies with general decay functions $\alpha$.
Our design is a strong contribution from both theoretical and practical
perspectives, providing a novel near-linear worst-case bound on the
computation, performance guarantees that nearly match those of exact
{\sc Greedy}, and a scalable implementation
that runs on large networks.  The heart of our design is a novel algorithmic technique
of efficiently maintaining weighted samples from influence sets that
allows us to accurately estimate marginal influences as nodes are
added to the seed set.

\ignore{
In Section~\ref{confidence:sec} we consider adaptive confidence bounds
that are relevant for both distance-based and reachability-based influence. As noted by Cohen et al.~\cite{binaryinfluence:CIKM2014}, 
the confidence bounds on estimation quality obtained through the
worst-case analysis are very pessimistic.  In reality, for both state-of-the-art
\skim~\cite{binaryinfluence:CIKM2014} and TIM~\cite{TXS:sigmod2014}, 
the actual performance of the influence maximization algorithm is much
closer to that of the ``exact'' greedy algorithm.  
In practice, we may be interested in specified accuracy and
confidence or on getting a hold on the actual performance.  
Adaptive error estimation can be used
to both obtain realistic bounds on the
error and to 
adjust the parameters of the algorithm on the fly according to
a prespecified desired accuracy~\cite{binaryinfluence:CIKM2014}.
The importance of providing such realistic confidence bounds in
general with approximate
query processing has been recently
highlighted~\cite{AMKTJMMS:SIGMOD2014}.\daniel{will we have the
  adaptive confidence stuff in the paper? We can mention it, but not
  sure to which degree}\edith{I am thinking of including only a very simple
  adaptiveness.  Comparing the estimated and exact marginal influence,
  and sampling only if exact is more than (1-epsilon) smaller.  this
  will be easy and valuable, as it will give the same guarantee
  without paying the worst-case computation bound.}

We include here an evaluation of adaptive error estimation which
yields much tighter bounds than the worst-case analysis. 
The error estimation entails only a little overhead on the computation.
} 

Section~\ref{sec:exp} presents a comprehensive experimental evaluation
of our algorithms. For threshold influence oracles
and maximization, we obtain three orders of magnitude speedups over the algorithms of Du et
al.~\cite{DSGZ:nips2013}, with no loss in quality.  Even though both approaches apply the
sketches of Cohen~\cite{ECohen6f}, we are able to obtain improvements by
working with combined sketches, applying better estimators, and, in our
IM algorithm, only computing sketches to the point needed to determine the next node. 
We also show that the generalization to arbitrary decay functions $\alpha$ is only slightly slower, and can easily handle graphs with hundreds of millions of edges.

\notinproc{We
conclude in Section \ref{conclu:sec}.} \onlyinproc{Due to page limits,
  most proofs and many details are omitted.  We refer the reader to 
a full version in http://arxiv.org/abs/1410.6976
}

\section{Distance-Based Model}  \label{model:sec}

 For a set of entities $V$, {\em utility} $u_{vu}$ for $v,u\in V$, and a
 {\em seed set} $S \subset V$, we
 define the {\em influence} of $S$ as 
$$\INF(S) = \sum_{u\in V} \min_{v\in S} u_{uv}\ .$$
{\em Distance-based} utility is defined with respect to
a  non-increasing function $\alpha$ such that
$\alpha(\infty)\equiv 0$
and (a set or distribution over)
edge-weighted graphs $G=(V,E,w)$, where the nodes correspond to
entities
and edges $e\in E$ have lengths $w(e)>0$.

We refer to a single graph $G$ as an {\em instance}.
We denote by $d_{vu}$ the shortest-path distance in $G$ from $v$ to $u$. 
When there is no path from $v$ to $u$ in $G$ we define $d_{vu}\equiv
\infty$.  
For a set of nodes $S$, we let 
$d_{Su}= \min_{v\in S} d_{vu}$ be the shortest-path distance in $G$ from $S$
to $u$.

The utility with respect to a single instance is defined as $u_{vu} =
\alpha(d_{vu})$, yielding the influence function
$$\INF(G,S) = \sum_{u\in V} \alpha(d_{Su}).$$


\ignore{
The well-studied special case of
{\em reachability-based} influence is obtained when using uniform
edge lengths $w(e)\equiv 1$ and $\alpha(x)\equiv 1 
 \iff x<\infty$; the reachability-based influence of $S$ is the cardinality of its reachability set, that is,
the number of nodes reachable from at least one node in
$S$.}


For a set $\mathcal{G}=\{\cascade{i}\}$
of $\ell\geq 1$  instances 
$\cascade{i}=(V,E^{(i)},w^{(i)})$, we define the 
utility as the average $u_{vu} = \frac{1}{\ell}
\sum_{i=1}^\ell \alpha(d^{(i)}_{vu})$ and accordingly the influence is
the average of the single-instance influence:
\begin{equation} \label{timedinf}
\INF(\mathcal{G},S)=\INF(\{\cascade{i}\},S) = \frac{1}{\ell} \sum_{i\in [\ell]}
\INF(\cascade{i},S). 
\end{equation}
Our algorithms work with inputs specified as a set of one or more instances and
with respect to the influence function \eqref{timedinf}.

A set of instances can be 
derived from traces or generated by Monte-Carlo simulations of 
a probabilistic model $\mathcal{G}$. Such a model defines a
distribution
 over instances $G \sim \mathcal{G}$ which 
share a set $V$ of nodes. The respective utility is then 
$u_{uv} =
\E[\alpha(d_{uv})]$ and the 
influence of $\mathcal{G}$ is then the expectation 
\begin{equation} \label{timedprobinf}
\INF(\mathcal{G},S)=\E_{G\sim \mathcal{G}} \INF(G,S). 
\end{equation}
When $\INF$ is well concentrated around its expectation, a small
number of simulations suffices to approximate the results.



\subsection{The Exact Greedy Sequence} \label{greedy:sec}
 
 We present the exact greedy algorithm for the distance-based influence objective 
 $\INF(\{\cascade{i}\},S)$, as defined in Equation~\eqref{timedinf}.
{\sc Greedy} 
starts with an empty seed set $S=\emptyset$.  In each iteration,
it adds the node with maximum marginal gain, that is, the node $u$ 
that maximizes the influence of $S\cup\{u\}$. 
\ignore{
\begin{algorithm} \caption{{\sc Greedy}}
\begin{algorithmic}
\Statex $S\gets \emptyset$ \Comment{initialize seed set}
\While {$|S|< s$}
\State $c \gets \arg\max_{i\not\in S}  \INF(\{\cascade{i}\},S\cup\{i\})$
\State $S \gets S \cup \{c\}$
\EndWhile 
\end{algorithmic}
\label{basicgreedy:alg}
\end{algorithm}
}
{\sc Greedy} thus produces  a sequence of nodes, providing an approximation
 guarantee for the seed set defined by each prefix.  

 We now elaborate on the computation of the marginal gain of $u$ given 
 $S$.  To do so efficiently as $S$ grows, we work with a {\em residual
   problem}.  
We denote the residual problem of $\mathcal{G}$
with respect to seed set $S$ as $\mathcal{G}|S$.
The influence of a set of nodes $U$ in the residual problem
is equal to the marginal influence in the original problem:
$$\INF(\mathcal{G}|S,U) = 
\INF(\mathcal{G}, S\cup U)-\INF(\mathcal{G}, S).$$

  A residual problem has a slightly more general specification. The
  input has the form $(\mathcal{G},\delta)$, where $\delta^{(i)}_v \geq 0$ maps
  node-instance pairs $(v,i)$ to nonnegative numbers.
The $\delta$ values we use for the residual problem $\mathcal{G}|S$ are
the respective distances from the seed set (but can be truncated
without violating correctness 
at any 
distance $x$ in which $\alpha(x)=0$):
$$\delta^{(i)}_v = d^{(i)}_{Sv} \equiv \min_{u\in S} d^{(i)}_{uv}.$$
When the seed set is empty or the node $v$ is either not reachable
from $S$ in instance $i$ or has distance  $d^{(i)}_{Sv} > \sup_x
  \{\alpha(x)>0\}$, 
we can use $\delta^{(i)}_v = \infty$ or 
any $\delta^{(i)}_v > \sup_x \{\alpha(x)>0\}$, which is equivalent.

We now extend
the influence definition for inputs of the form $(\mathcal{G},\delta)$.
For a node $u$, we consider the contribution  of each node-instance pair
$(v,i)$ to the influence of $u$:
\begin{equation} \label{Deltadef:eq}
\Delta^{(i)}_{uv} \equiv \max\{0,
\alpha(d^{(i)}_{uv})-\alpha(\delta^{(i)}_v)\}.
\end{equation}
  The influence of $u$ in the residual problem is the (normalized) sum of these
  contributions over all nodes in all instances:
\begin{equation} \label{minfdef}
\INF((\mathcal{G},\delta),u) \equiv \frac{1}{\ell} \sum_i \sum_v
\Delta^{(i)}_{uv}.
\end{equation}
 It is not hard to verify the following.
\begin{lemma}
For any set of nodes $U$,
the influence of $U$ in $\mathcal{G}|S$ is the
 same as marginal influence of $U$ with respect to $S$ in
 $\mathcal{G}$.  
\end{lemma}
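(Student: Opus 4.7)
The plan is to unfold both sides of the claimed identity using the definitions, and then collapse them by exploiting monotonicity of $\alpha$. The key identity driving everything is that for any disjoint union of source sets, $d^{(i)}_{(S\cup U)v} = \min(d^{(i)}_{Sv}, d^{(i)}_{Uv})$, which because $\alpha$ is non-increasing translates into
\[
\alpha(d^{(i)}_{(S\cup U)v}) = \max\bigl(\alpha(d^{(i)}_{Sv}),\,\alpha(d^{(i)}_{Uv})\bigr).
\]

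First I would write out the marginal influence explicitly from \eqref{timedinf}:
\[
\INF(\mathcal{G},S\cup U)-\INF(\mathcal{G},S)=\frac{1}{\ell}\sum_i\sum_v\bigl[\alpha(d^{(i)}_{(S\cup U)v})-\alpha(d^{(i)}_{Sv})\bigr].
\]
Using the displayed max identity, each summand equals $\max\bigl(0,\alpha(d^{(i)}_{Uv})-\alpha(d^{(i)}_{Sv})\bigr)$. Next I would distribute the outer $\max$ over the inner maximum, using $\alpha(d^{(i)}_{Uv}) = \max_{u\in U}\alpha(d^{(i)}_{uv})$, to conclude that the summand equals $\max_{u\in U}\max\bigl(0,\alpha(d^{(i)}_{uv})-\alpha(d^{(i)}_{Sv})\bigr)$. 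Substituting $\delta^{(i)}_v=d^{(i)}_{Sv}$ from the definition of the residual and recognising \eqref{Deltadef:eq}, this is exactly $\max_{u\in U}\Delta^{(i)}_{uv}$.

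The final step is to match this to $\INF(\mathcal{G}|S,U)$. The definition \eqref{minfdef} in the excerpt is written only for a single node $u$, so I would first state the natural extension to a set, $\INF((\mathcal{G},\delta),U)=\tfrac{1}{\ell}\sum_i\sum_v\max_{u\in U}\Delta^{(i)}_{uv}$, justifying it as the instantiation of the general $\INF(S)=\sum_j\max_{i\in S}u_{ij}$ paradigm with the residual per-pair utilities $\Delta^{(i)}_{uv}$ averaged over instances. Once that extension is stated, the two expressions are manifestly equal and the lemma follows.

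The only real subtleties are bookkeeping: handling the $S=\emptyset$ case where one conventionally sets $\delta^{(i)}_v=\infty$ so that $\alpha(\delta^{(i)}_v)=0$, and handling pairs $(v,i)$ with $d^{(i)}_{Sv}$ beyond the support of $\alpha$, where the truncation convention noted in the excerpt (setting $\delta^{(i)}_v$ to any value with $\alpha(\delta^{(i)}_v)=0$) leaves $\Delta^{(i)}_{uv}$ unchanged. Both cases are covered uniformly once one notes that both $\alpha(d^{(i)}_{Sv})$ and $\alpha(\delta^{(i)}_v)$ are zero there, so the marginal contribution reduces to $\max(0,\alpha(d^{(i)}_{uv}))=\alpha(d^{(i)}_{uv})$ on either side. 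No step is a real obstacle; the only thing to be careful about is making the set-extension of \eqref{minfdef} explicit so that the algebra lines up cleanly.
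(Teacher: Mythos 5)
Your proof is correct, and it is essentially the verification the paper has in mind: the paper offers no explicit argument (it says only ``It is not hard to verify the following''), and your chain of identities --- $d^{(i)}_{(S\cup U)v}=\min(d^{(i)}_{Sv},d^{(i)}_{Uv})$, monotonicity of $\alpha$ turning $\min$ into $\max$, then $\max(a,b)-a=\max(0,b-a)$ and commuting the two maxima to recover $\max_{u\in U}\Delta^{(i)}_{uv}$ --- is the direct unfolding of the definitions that the authors are implicitly relying on. You are also right to flag that Equation~\eqref{minfdef} is stated only for a single node and to make the set extension $\INF((\mathcal{G},\delta),U)=\tfrac{1}{\ell}\sum_i\sum_v\max_{u\in U}\Delta^{(i)}_{uv}$ explicit; that is the one piece of bookkeeping the paper leaves tacit.
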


 Given a residual input $(\mathcal{G},\delta)$, the influence of a 
 node $u$ (which is the same as its marginal influence \onlyinproc{$\MargGain(u)$ }in the original 
 input $\mathcal{G}$) can be computed using a pruned application of 
 Dijkstra's algorithm from $u$.\notinproc{ A pseudocode is provided as the 
 function $\MargGain(u)$ in Appendix~\ref{app:ps:funcgreedy}.}
The pruning is performed for efficiency 
 reasons by avoiding expanding the search in futile directions.  In 
 particular, we can always prune at distance $d$ when $\alpha(d)=0$ or 
 when $d\geq \delta^{(i)}_v$.  The correctness of the pruning follows 
 by observing that all nodes $u$ Dijkstra could reach from the pruned 
 node have $\Delta^{(i)}_{uv}=0$. 

At each step, {\sc Greedy} selects a node with maximum influence in the residual input. 
It then updates the distances $\delta$ so that they 
capture the residual problem 
$\mathcal{G}| S\cup\{u\}$.  \onlyinproc{For details see the full version.}\notinproc{A pseudocode for 
updating $\delta$ is provided as 
the function $\AddSeed(u)$ in Appendix~\ref{app:ps:funcgreedy}.  To update, we perform a 
pruned single-source shortest-paths computation in each instance $i$ from $u$, as in 
$\MargGain(u)$.}

\ignore{
\edith{The following text was replaced with concise pseudocode and unified}
*************

For threshold influence, 
we mark all nodes in all
 instances that are within distance at most $T$ from $S$.  For each
 marked node-instance pair $(v,i)$, we store the distance
from $S$, i.e., $\delta^{(i)}_v \equiv d_{Sv}^{(i)}$.  
  We then compute the marginal gain of $u$ by 
 considering each instance $i$ and performing a Dijkstra single-source shortest-paths computation 
 from $u$, truncated at distance $T$.  We count all unmarked nodes we encounter
 in the search.
When the algorithm processes $v$ (takes $v$ from the
priority queue),  we check if $\delta^{(i)}_v \leq d_{uv}^{(i)}$, and
if true, the search is pruned at $v$.  The pruning is done for
efficiency reasons, avoiding work that can not lead to new unmarked nodes.
Finally, we take the average of these cardinalities (number of unmarked
nodes that are visited) over all 
instances.


We can generalize this computation of the marginal gain
for distance-based influence with arbitrary $\alpha$. We similarly store with each node $v$ in each instance $i$
 the distance $\delta^{(i)}_v \equiv d_{Sv}^{(i)}$.  
The contribution of the node-instance pair $(v,i)$ to the marginal gain of $u$ is 
$\max\{0,\alpha(d_{uv}^{(i)})-\alpha(\delta^{(i)}_v)\}$.  The marginal gain of $u$
is the sum of these contributions over all nodes in all instances, divided by $\ell$. 
To compute the marginal gain of adding 
 $u$ to $S$ we perform a Dijkstra's single-source shortest-paths computation 
from $u$ in each instance $i$.  When the algorithm processes $v$, the
search is pruned when $\delta^{(i)}_v \leq d_{uv}^{(i)}$.  Otherwise,
a contribution $\alpha(d_{uv}^{(i)})-\alpha(\delta^{(i)}_v)$ is added.
} 

\notinproc{
A straightforward implementation of {\sc Greedy} will recompute $\MargGain(u)$ for  all 
nodes $u$ after each iteration.  A common acceleration is instead to perform 
lazy evaluations: one keeps older values of marginal gains, which are
upper bounds on the current values, and updates the current values
only for the candidates at the top of the queue as needed to determine
the maximum one.
Lazy evaluations for reachability-based influence 
  were used by Leskovec et al.~\cite{Leskovec:KDD2007} in their CELF algorithm. 
The correctness of lazy evaluations follows from 
the submodularity and monotonicity of the objective, which imply that the marginal gain of a node can only decrease as the seed set grows. 
}

\subsection{Approximate Greedy Sequences} \label{approxgreedy:sec}
{\sc Approximate Greedy} is similar to exact {\sc Greedy}, but
in each iteration, instead of selecting a seed node with maximum
marginal gain, we select a seed node with marginal
contribution that is within a small relative error $\epsilon$ of the
maximum with high probability.  It also suffices to require 
that the relative error is bounded by $\epsilon$ in 
expectation and is concentrated, that is, $\forall a>1$, 
the probability of error exceeding
$a\epsilon$ decreases exponentially in $a$.
It turns out that the approximation
ratio of {\sc Approximate Greedy} is $1-(1-1/s)^s -O(\epsilon)$ with
corresponding guarantees~\cite{binaryinfluence:CIKM2014}.

  The \skim\ algorithm applies approximate greedy to
  reachability-based influence.  It works  with partial sketches (samples) from
  ``influence sets'' of nodes to determine a node with approximately maximum
  marginal gain.  These partial sketches need to be updated very
  efficiently after each seed is selected. 
  A critical component for the scalability and accuracy of \skim\ is
  sampling with respect to the residual problem.   This is because the
  marginal influence of a node can be much smaller than its initial
  influence and we can get good accuracy with a small sample only if
  we use the residual problem.

 The residual problem (and current samples) are updated
 after each seed selection, both with exact and approximate {\sc
   Greedy}.   We now consider the total number of edge traversals used
 in these updates.
With reachability-based influence,  with exact or approximate {\sc Greedy}, the
number of traversals is  linear in input size: This is because there can be at most one search which
progresses through a node in each instance. Once a node is reachable
from the seed set in that instance, it is influenced, and everything
reachable from the node in the same instance is reachable, and influenced, as well.  So
these nodes never need to be visited again and can be removed.

  This is not true, however, for distance-based influence:
For each node-instance pair $(v,i)$, 
the distance $\delta^{(i)}_{v}$ can be updated many times.
Moreover, when the distance to the seed set
 decreases, as a result of adding a new seed node, the node $v$ and its outgoing edges in the 
 instance $i$ are traversed.
  Therefore, to bound the number of edge traversals we must 
bound the number of updates of $\delta^{(i)}_{v}$.

For exact {\sc Greedy}, there are  pathological
inputs with $\Omega(sn)$ updates, which roughly translates to
$\Omega(s|E|)$ edge traversals, even on a single instance and
threshold influence.
\notinproc{\footnote{Consider a deterministic instance (directed
    graph) and threshold influence with $T>1$.  The nodes $\{a_i\}$ have edges to all nodes in a set
      $M$ with length $T(1-i/(2|A|))$ and special nodes
      $\{b_1,\ldots,b_i\}$ with distance $1$.  The greedy selection,
      and in fact also the optimal one,
      selects the nodes $\{a_i\}$ by decreasing index.  But each
    selection updates the $\delta$ for all nodes in $M$. We can choose
  number of $a$ nodes roughly $\sqrt{|M|}$ and number of $b$ nodes
  roughly $|M|$.  Obtaining that each of the first $\sqrt{n}$ seeds
  (where $n$ is the total number of nodes) results in $\delta$ updates
for $\Omega(n)$ nodes.} On realistic inputs, however, we expect the number of
  updates to be no more than $\ln s$ per node.
This is because even if seed nodes are added in random order, the
expected number of times the distance to the seed set decreases is
well concentrated around $\ln s$.   Therefore, since selected seeds
should typically be closer than other nodes, we would expect a small
number of updates.}   
  Remarkably, we show that our {\sc Approximate Greedy} selection
circumvents this worst-case  behavior of the
exact algorithm and guarantees a near-linear number of updates \onlyinproc{ (proof provided in the full version)}:
\begin{theorem} \label{randupdatebound:thm}
Suppose that the approximate greedy selection has the
  property  that for some $\epsilon$, the next seed is selected in a near
  uniform way from all nodes with marginal influence that is at least
  $(1-\epsilon)$ of the maximum.  Then the expected total
  number of updates of $\delta^{(i)}_v$ at a node-instance pair
  $(v,i)$ is bounded by $O(\epsilon^{-1}\log^2 n)$.
\end{theorem}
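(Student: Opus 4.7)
The plan is to fix a node-instance pair $(v,i)$ and recast the number of updates to $\delta^{(i)}_v$ as a ``records'' count in a random sequence. Order all nodes $u_1,u_2,\ldots,u_n$ by increasing $d^{(i)}_{u_j v}$. An update at step $k$ occurs precisely when the seed $s_k$ picked at step $k$ is strictly closer to $v$ in instance $i$ than all previously picked seeds, i.e., when $s_k$ is a left-to-right minimum in the time-ordered seed sequence under this ranking. So the total number of updates at $(v,i)$ equals the number of records in the picked-seed sequence.

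In the idealized fully uniform setting (each $s_k$ drawn uniformly from $V\setminus S_{k-1}$), the expected number of records is $H_n=O(\log n)$ by the classical exchangeability argument: conditional on the set of the first $k$ picks, each is equally likely to have been the $k$-th, so the probability that $s_k$ is the closest among the first $k$ picks is exactly $1/k$. My setting differs in two ways: (a) $s_k$ is drawn only \emph{near} uniformly from a restricted set $C_k$ of candidates within a $(1-\epsilon)$ factor of the maximum marginal gain, and (b) $C_k$ is data-dependent and could in principle be biased toward closer-to-$v$ nodes. The main obstacle, then, is controlling this bias.

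To handle this, I would use an epoch decomposition on the ``closest-seed rank'' $R_k$, the rank (in the distance-to-$v$ ordering) of the currently closest picked seed. Updates correspond exactly to strict decreases of $R_k$, and $R_k$ is nonincreasing with $R_0\le n$. Partition the run into the $O(\log n)$ dyadic epochs $R_k\in[n/2^{t},n/2^{t-1}]$; inside one epoch, the rank halves at most $O(\log n)$ times, so the epoch contributes at most $O(\log n)$ updates. It then suffices to show that the expected length of each epoch is $O(\epsilon^{-1}\log n)$. For this I plan to combine the near-uniform selection from $C_k$ with the submodularity of $\INF$ and the $(1-\epsilon)$-approximation guarantee: the max-gain node at step $k$ has marginal gain at least $(\textsf{OPT}_k-\INF(S_{k-1}))/|S^*|$, and every $u\in C_k$ matches it up to a $(1-\epsilon)$ factor, which restricts how the marginal-gain mass of $C_k$ can be distributed across the ``closer half'' of nodes relative to $v$. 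Translating this into a per-step lower bound of $\Omega(\epsilon/|C_k|)$ on picking into the closer half — equivalently $\Omega(\epsilon/\log n)$ amortized per step — gives the required epoch-length bound, and combining with the $O(\log n)$ epochs yields $O(\epsilon^{-1}\log^2 n)$.

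The hardest step will be making the per-step ``closer-half'' probability quantitative. The $\epsilon$-slack must be converted into an honest lower bound on the density of $C_k$ within the closer half of the current rank range, and this requires a careful use of submodularity applied to the residual problem $(\mathcal{G},\delta)$ rather than to a single instance, since the marginal-gain landscape at step $k$ is shaped by averaging $\Delta^{(i)}_{uv}$ over all instances and targets. I would also need to be careful that the ``near uniform'' slack from the selection rule compounds with the $(1-\epsilon)$ slack only multiplicatively in $\epsilon$ (not additively in a way that ruins the bound); a coupling to a fully uniform ``ghost'' pick from $C_k$ should suffice for this.
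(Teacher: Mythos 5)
Your setup is right --- recasting updates at $(v,i)$ as records in the time-ordered seed sequence, and identifying the data-dependent bias of the candidate set $C_k$ as the real obstacle --- but the epoch decomposition you then choose does not close the argument. You partition by the dyadic range of the closest-seed rank $R_k$, and claim each epoch contributes $O(\log n)$ updates; this is unsupported, since within the range $[n/2^t, n/2^{t-1}]$ the rank can strictly decrease up to $n/2^t$ times, and each strict decrease is an update. Your proposed repair --- showing each epoch has expected length $O(\epsilon^{-1}\log n)$ --- cannot be true in general: nothing forces $R_k$ to decrease at all (the high-gain candidates may simply live far from $v$), so an epoch can last arbitrarily long; indeed, summing such a per-epoch step bound over $O(\log n)$ epochs would bound the total number of greedy iterations by a polylogarithm, which is absurd. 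The ``closer-half density'' estimate is also aimed in the wrong direction: a \emph{lower} bound on the probability of picking a closer node makes updates more likely, not fewer. What you actually need is an exchangeability statement, and for that no density estimate is required.

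The paper's proof partitions by the \emph{value} of the maximum marginal influence rather than by the rank $R_k$: a ``step'' lasts while this maximum stays within a fixed $(1-\epsilon)$-factor window. By submodularity the maximum is nonincreasing, and the algorithm may stop once it drops below an $\epsilon/n$ fraction of its initial value (further seeds change the objective by an $O(\epsilon)$ relative amount), so there are only $O(\epsilon^{-1}\log n)$ steps. Within a step, every selection is made near-uniformly from the candidate set $S'$ of nodes within $(1-\epsilon)$ of the maximum; restricting attention to the subgroup of candidates with $d^{(i)}_{uv}<\delta^{(i)}_v$, the $j$th member of this subgroup to be selected is closer than all previously selected ones with probability about $1/j$ --- regardless of how large or how biased toward $v$ that subgroup is --- so each step contributes at most $\ln n$ expected updates. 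Multiplying gives $O(\epsilon^{-1}\log^2 n)$. The lesson is that the $\epsilon$ enters only through counting the $(1-\epsilon)$-levels of the maximum marginal gain, not through any quantitative control of how the candidates are distributed around $v$; the hard step you flagged (``making the closer-half probability quantitative'') is one the paper's decomposition avoids entirely.
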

\notinproc{
\begin{proof}
 Consider the candidate next seeds $S'$ (those with marginal influence
 that is within
$(1-\epsilon)$ of the maximum).  For a node
instance pair $(v,i)$,
consider all nodes $u\in S'$ with $d_{uv}< \delta^{(i)}_v$.  Then the
probability that the $j$th selected node from this group would be
closer than all previously selected ones is $1/j$.  We obtain that the
expected number of updates to $\delta^{(i)}_v$ from nodes in this
group is at most $\ln n$.  We now consider, for the sake of analysis
only, partitioning the greedy selection to steps, according to a
decrease by a factor of $(1-\epsilon)$ of the
maximum marginal influence of a node.  We showed that the expected
number of selections in a step is at most $\ln n$.  Also note that
the marginal influence of a node can only
decrease after another node is selected, so it is possible that it moves to the next step.
We now bound the number of steps.  Because we are interested in
$\epsilon$ approximation, we can stop the greedy algorithm once
the maximum marginal influence is below $\epsilon/n$ of the initial
one.  Therefore, there are at most $\epsilon^{-1}\log n$ steps.
\end{proof}
}


\ignore{
 *****************

To implement lazy evaluations we 
place all nodes in a max priority queue which stores triples of key (node),
priority (previously computed marginal gain), and freshness $|S|$ (the 
size of the seed set with respect to which the marginal gain was computed). 
Initially, each node $v$ has value $\INF(\{\cascade{i}\},\{v\})$ and freshness $0$. 
We proceed by iteratively removing the highest priority node $u$
from the queue. If the freshness is $|S|$,   we place $u$ in $S$.  Otherwise, we compute the marginal 
 gain $x \gets \INF(\{\cascade{i}\},S \cup \{v\}) - \INF(\{\cascade{i}\},S)$.  If $x$ is at 
 least as high as the current maximum priority  in the queue, we place $u$ in $S$.  If 
 not, we reinsert $u$ into the queue with value $x$ and freshness $|S|$. 
 A pseudocode for this lazy implementation of greedy {\sc Lazy Greedy} is provided as Algorithm~\ref{PQgreedy:alg}.

\begin{algorithm2e} \caption{{\sc Lazy Greedy}\label{PQgreedy:alg}}
\Statex $S\gets \emptyset$ \Comment{initialize seed set}
\Statex $I_S \gets 0$ \Comment{influence $\INF(\{\cascade{i}\},S)$ of set $S$}
\Statex $Q\gets \perp$ \Comment{initialize priority queue $Q$}
 \For {all nodes $v$}
 \State {\sc Q.Insert}$(v, \INF(\{\cascade{i}\},\{v\}),0)$ \Comment{initial freshness is $|S|=0$}
 \EndFor 
 \While {$|S|< s$}
 \State {\sc Q.Popup}$(u,x,h)$ \Comment{remove max element from Q}
\State \Comment{recompute 
   marginal influence improvement}
 \If {$h<|S|$} 
 $x \gets \INF(\{\cascade{i}\},S\cup\{u\})-I_S$ 
 \EndIf 
 \If {$x  \geq $ {\sc Q.MaxVal}} \tcp{compare with highest 
  priority}
 \State $S\gets S \cup \{u\}$
 \State $I_S \gets I_S+x$ \tcp{influence of seed set $S$}
 \Else 
 \State {\sc Q.Insert}$(u,x,|S|)$ \Comment{reinsert $u$ with updated priority}
 \EndIf 
 \EndWhile 
\end{algorithm2e}
}

\section{Threshold influence} \label{thresh:sec}

We present  both an oracle and an approximate greedy IM algorithm  for
threshold influence~\cite{Gomez-RodriguezBS:ICML2011,DSGZ:nips2013}.  In this model, 
a node is considered influenced by $S$ in an instance if it is
  within distance at most $T$ from the seed set. Formally, $\alpha(x)=1$
  when $x\leq T$ and $\alpha(x)=0$ otherwise.   
\notinproc{The simpler structure
  of the kernel allows for simpler algorithms, 
  intuitively,  because the contributions to influence
  $\Delta^{(i)}_{uv}$ are in $\{0,1\}$, as with reachability-based influence.}

\subsection{Threshold-Influence Oracle} 

 Our
influence oracle for a prespecified
  threshold $T$ generalizes the
reachability-based influence oracle of Cohen et al.~\cite{binaryinfluence:CIKM2014}.
The reachability-based influence oracle preprocesses the input to compute a {\em combined reachability sketch} for
each node.  Each node-instance pair is assigned a random permutation
rank (a number in $[n \ell]$) and the combined reachability sketch of
a node $u$ is a set consisting of the $k$ smallest ranks amongst node-instance pairs
$(v,i)$ such that $v$ is reachable from $u$ in instance $i$.
This is also called a bottom-$k$ sketch of reachable pairs. 
The oracle uses the sketches of the nodes in $S$ to estimate their
influence by applying 
the union size estimator
for bottom-$k$ sketches~\cite{CK:sigmetrics09}. 
The combined reachability sketches are
built
by first computing a set of reachability sketches (one for each node)~\cite{ECohen6f} in each instance
and then combining, for each node,  the sketches obtained in different
instances to obtain one size-$k$ sketch.   In turn, the computation
for each instance 
uses
reverse (backward) 
reachability computations.  The algorithm of Cohen~\cite{ECohen6f}
initiates these reversed reachability searches  from all nodes in a
random permutation order.  These searches  are pruned at nodes
already visited $k$ times.

For threshold influence, we  instead
consider a pair $(v,i)$ reachable from $u$ if $d^{(i)}_{uv}\leq T$.
We then compute for each node  the bottom-$k$ sketch of these
``reachable'' pairs under the modified definition.  The oracle
estimator~\cite{CK:sigmetrics09} is the same one used for the reachability-based
case; the estimate has (worst-case) CV that is at most 
$1/\sqrt{k-2}$ with good concentration.
The computation of the sketches is nearly as efficient as for the
reachability-based case.
 Instead of using reverse reachability searches, 
for threshold influence we use
{\em reverse Dijkstra} computations (single-source shortest-path searches on the graph with reversed edges).
These computations are pruned both at distance $T$ and (as with
reachability sketches) at nodes already visited $k$ times.
The sets of sketches obtained for the different instances are
combined as in~\cite{binaryinfluence:CIKM2014} to obtain a set of combined sketches
(one combined sketch with $k$ entries for each node).

The running time is dominated by the computation of the sketches.
The preprocessing computation is $O(k \sum_{i=1}^\ell  |E^{(i)}| \log n)$, the sketch representation is $O(kn)$, 
and each influence query for a set $S$ takes
$O(|S|k \log |S|)$ time.  

\onlyinproc{As in~\cite{binaryinfluence:CIKM2014} for
  reachability-based influence in the IC model, we can construct an
  oracle of the same size and approximate guaranees for a distance-based IC model.
 The algorithm, however, uses $kn$ reverse searches (see full version for details).}
 \notinproc{

\paragraph{Oracle for a distance-based IC model} 
A distance-based Independent Cascade (IC) model $\mathcal{G}$ is specified by associating 
an independent  random length $w(e) \in [0,+\infty]$
with each edge $e$ according to a distribution that is associated with 
the edge. 
The probability that $e$ is live is $p_{e}=\Pr[w(e)<\infty]$
and its length if it is live is $w(e)$. We use the convention that an 
edge $e$ that is not explicitly specified has $w(e)\equiv \infty$ (is never live). 

 As with the IC model for reachability-based influence \cite{binaryinfluence:CIKM2014,LOS:kdd2015},
  we can compute oracles based on sketches that have the same size,
  query time, and approximation guarantees, but with respect to a distance-based IC
  model (where the exact influence is defined as the expectation over
  the model instead of the
  average of $\ell$ deterministic instances).   The worst-case computation
  involved, however,  is higher.
The algorithm performs up to $k=O(\epsilon^{-2}\log n)$ 
reverse Dijkstra searches
  (pruned to distance $T$) from each node.  Each search randomly instantiates
incoming edges of scanned nodes according to the model.  Pseudocode is
  provided as Algorithm~\ref{ICoracle:alg}.
Note that in contrast to 
sketching instances, the searches are not pruned at
nodes that already have a size-$k$ sketch.   Therefore we can not
bound the total number of node scans by as we did with instances, and
have a worst-case bound of $O(kn^2)$.    On the positive size, we do
maintain the property that the sketch sizes are $k$.
 The correctness  
  arguments of the approximation guarantees with respect to the 
  expectation of the IC model carry over from \cite{binaryinfluence:CIKM2014}. 

\begin{algorithm2e}[h]\caption{Build sketches $S(v)$, $v\in V$ for an IC model\label{ICoracle:alg}}
\DontPrintSemicolon
\lForEach(\tcp*[f]{Initialize sketches}){$v\in V$}{$S(v)\gets \emptyset$}
\tcp{Iterate over $k$ random permutations of the nodes $V$}
\For{$j = 0,\ldots,k-1$}
{
Draw a random permutation of the nodes $\pi: [n] \rightarrow V$\;
\For{$i=1,\ldots,n$}
{
Perform a reverse Dijkstra search from node $\pi(i)$ pruned to distance
$T$\;
\ForEach{scanned node $u$}
{
\If{ $|S(u)| < k$}
{$S(u) \gets S(u) \cup \{n*j+i\}$ \;
\If{all nodes $z$ have $|S(z)|=k$ }{exit}
Instantiate and process length and presence of incoming edges of $u$\; 
}
}
}
}
\Return{$S(v)$ for all $v\in V$}
\end{algorithm2e}

 }


\subsection{Threshold-Influence Maximization} 
Our algorithm for threshold influence maximization, which we call \tskim, generalizes
\skim~\cite{binaryinfluence:CIKM2014}, which was designed for 
reachability-based influence.  \notinproc{A pseudocode for \tskim\ is provided as 
Algorithm~\ref{Tskim:alg} in Appendix~\ref{app:ps:alg}.}

Our algorithm \tskim\ 
builds sketches,
but only to the point of determining the node
with maximum estimated influence.  We then compute a residual problem
which updates the sketches.  \tskim\ build sketches using
reverse single-source shortest path computations that are pruned at 
distance $T$ (depth-$T$ Dijkstra).
 As with exact greedy for distance-based
influence (Section \ref{model:sec}), \tskim\ maintains a residual
problem.  This requires updating the distances
$\Covered[v,i]=d^{(i)}_{Sv}$ from the current seed set $S$, as in $\AddSeed(u)$,
and also updating the sketches to remove the contributions of pairs
that are already covered by the seed set.

  The (worst-case) estimation quality guarantee of \tskim\ is
similar to that of \skim. When using $k=O(\epsilon^{-2} \log n)$
we obtain that, with high probability (greater than $1-1/\text{poly}(n)$), for
all $s\geq 1$, the
influence of the first $s$ selected nodes is at least $1-(1-1/s)^s-\epsilon$ of the maximum influence of
a seed set of size $s$.  \onlyinproc{The computation timeis
  near-linear and analysis is
  provided in the full version.}\notinproc{The computation time analysis of \tskim\ is deferred to
Appendix \ref{tskimtime:sec}.}

\section{Influence oracle} \label{timedQ:sec}

We now present our oracle for distance-based influence, as defined in Equation
\eqref{timedinf}.
We preprocess the input $\mathcal{G}$ to 
compute a sketch $X_v$ for each node $v$.  Influence queries,
which are specified by a seed set $S$ of nodes and {\em any} function
$\alpha$,  can be
approximated from the sketches of the query seed nodes.

\notinproc{
 Note that the same  set of sketches can be used 
 to estimate distance-based influence with respect to any non-increasing function 
 $\alpha$.  That is, $\alpha$ can be specified on the fly, after the 
sketches are computed.  
When we are only interested in a specific $\alpha$, such as a
threshold function with a given $T$ (Section \ref{thresh:sec}) or
reachability-based influence \cite{binaryinfluence:CIKM2014}, the sketch size  and 
construction time can be reduced.

 In the following we present in detail the three components of our oracle: 
the definition of the sketches, estimation of 
 influence from sketches (queries), and building the sketches
 (preprocessing).

 The sketches are defined in Section \ref{cADS:sec} and we show that
for an  input specified as either a set of instances or as a distance-based IC model,
each sketch $X_v$ has a (well concentrated) expected size that is at
most $k\ln (nk)$.   The total storage of our oracle is therefore $O(nk\log(nk))$.

 The sketch-based influence estimator is presented in Section \ref{infest:sec}.
We establish the following worst-case bounds on estimation
quality.  
\begin{theorem} \label{timedinforacles:thm}
Influence queries $\INF(\mathcal{G},S)$, specified by a 
set $S$ of seed nodes and a function $\alpha$, can be estimated in $O(|S|k\log n)$
time from the 
sketches $\{X_u \mid u\in S\}$.  The 
estimate is nonnegative and unbiased, has CV $\leq 1/\sqrt{2k-2}$,  and is 
well concentrated (the probability that the relative 
error exceeds $a/\sqrt{k}$ decreases exponentially with $a>1$). 
\end{theorem}
We also show that our estimators  are designed to fully exploit the information in
the sketches in an instance-optimal manner.  
\notinproc{
 The preprocessing is discussed in Section \ref{combads:sec}.  We
 show that
 for a set of $\ell$ instances $\mathcal{G}=\{\cascade{i}\}$, the 
 preprocessing is performed in expected time $O(k \sum_{i=1}^\ell 
 |E^{(i)}| \log n)$.  }

\ignore{
Theorem \ref{timedinforacles:thm}:
\begin{proof}
The 
bounds for a single seed follow from the analysis of the HIP estimator
in \cite{ECohenADS:PODS2014}.  The worst-case bound for a larger seed set 
follows from \cite{ECohenADS:PODS2014} in combination with the 
optimality of the L$^*$ estimator \cite{sorder:PODC2014}. 
\end{proof}
}
} 

\onlyinproc{
We present here the definition
of the sketches. 
For an  input specified as either a set of instances or as a distance-based IC model,
each sketch $X_v$ has a (well concentrated) expected size that is at
most $k\ln (nk)$ 
(and $k\ln(n\min\{k,\ell\})$ when using $\ell$ instances).
 The total storage of our oracle is therefore $O(nk\log(nk))$
(and $nk\ln(n\min\{k,\ell\})$ when using $\ell$ instances).

In the full version we detail our influence estimator, which optimally uses the information in the
sketch, and the algorithm to compute the sketches (preprocessing).   We show 
that for a set of $\ell$ instances $\mathcal{G}=\{\cascade{i}\}$, the 
expected time is $O(k \sum_{i=1}^\ell 
 |E^{(i)}| \log n)$.  We establish
  the following worst-case bounds on estimate quality:
\begin{theorem} \label{timedinforacles:thm}
Influence queries $\INF(\mathcal{G},S)$, specified by a 
set $S$ of seed nodes and a function $\alpha$, can be estimated in $O(|S|k\log n)$
time from the 
sketches $\{X_u \mid u\in S\}$.  The 
estimate is nonnegative and unbiased, has CV $\leq 1/\sqrt{2k-2}$,  and is 
well concentrated (the probability that the relative 
error exceeds $a/\sqrt{k}$ decreases exponentially with $a>1$). 
\end{theorem}
}
\notinproc{

\subsection{Combined ADS} \label{cADS:sec}
}

 Our  {\em combined} All-Distances Sketches~($\cADS$) are a
 multi-instance generalization of All-Distances Sketches ($\ADS$)
 \cite{ECohen6f,bottomk07:ds,ECohenADS:PODS2014} and build on the
 related combined reachability sketches
 \cite{binaryinfluence:CIKM2014} used for reachability-based influence.

The $\cADS$ sketches are randomized structures defined with respect to 
random rank values $r^{(i)}_u 
  \sim U[0,1]$ associated with each node-instance pair $(u,i)$. 
To improve estimation quality in practice, we
restrict ourselves to a particular form of
{\em
  structured permutation ranks}
\cite{binaryinfluence:CIKM2014}:
For a set of instances, the ranks are a permutation of $1,\ldots,n 
\min\{\ell,k\}$, where each block of positions of the form
$in,(i+1)n-1$ (for integral $i$) corresponds
to an independent  random permutation of the nodes. For each node $u$, the instances
$i_j$ in $r^{(i_j)}_u$, when ordered  by
increasing rank, are a uniform random selection (without replacement).

For each node $v$,  $\cADS(v)$ is a set of 
rank-distance pairs of the form $(r^{(i)}_{u},d^{(i)}_{vu})$ which 
includes $\min\{\ell,k\}$ pairs of distance $0$, that is,
all such pairs if $\ell\leq k$ and the $k$ smallest rank values 
otherwise. 
It also includes pairs with positive distance when the rank value is at most the $k$th 
smallest amongst closer nodes (across all instances).
Formally,
\begin{equation} \label{cADS:def}
\cADS(v)=\left\{ \begin{array}{l} 
 \bigl\{ (r^{(i)}_v,0) \mid r^{(i)}_v\in \text{{\sc 
  bottom-}}k\{ r^{(j)}_v | j\in [\ell] \} \bigr\} 
\\
\bigl\{ (r^{(i)}_{u},d^{(i)}_{vu}) \mid r^{(i)}_u < \kth_{(y,j)\mid d^{(j)}_{vy}< d^{(i)}_{vu}} r^{(j)}_y \bigr\} \end{array} \right.. 
\end{equation}
Here {\sc bottom}-$k$ refers to the smallest $k$ elements in the set and
$\kth$ denotes the $k$th smallest element in the set.
 When
there are fewer than $k$ elements, we define $\kth$ as the
domain maximum.
For the purpose of sketch definition, we treat all positive 
distances across instances as unique\notinproc{; we apply some arbitrary tie-breaking,
for example according to node-instance index, when this is not the case}.

\notinproc{
The size of $\cADS$ sketches is a random variable,  but we can 
bound its expectation.
Moreover, $|\cADS(u)|$ is well 
concentrated  around the expectation.  The proof resembles 
that for the basic ADS~\cite{ECohen6f}.  
\begin{lemma}
 $\forall u,\,  \E[|\cADS(u)|] \leq k \ln(n\min\{k,\ell\})$. 
\end{lemma}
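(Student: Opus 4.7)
The plan is to adapt the classical expected-size bound for single-instance All-Distances Sketches from~\cite{ECohen6f} to the multi-instance, structured-rank setting. Fix a node $v$ and enumerate all $n\ell$ node-instance pairs $(u,i)$ in order of increasing distance $d^{(i)}_{vu}$, breaking ties consistently; call this ordering $p_1,p_2,\ldots,p_{n\ell}$. The first $\ell$ pairs are the zero-distance self-pairs $(v,1),\ldots,(v,\ell)$, and by the first clause of~\eqref{cADS:def} exactly $\min\{k,\ell\}$ of them are retained in $\cADS(v)$, regardless of the realized ranks.

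For a pair $p_j$ with $j>\ell$, the second clause of~\eqref{cADS:def} retains $p_j$ iff its rank is strictly less than the $k$-th smallest rank among $p_1,\ldots,p_{j-1}$, equivalently iff its rank falls in the bottom $k$ of the ranks of $p_1,\ldots,p_j$. If the ranks were i.i.d.\ uniform, exchangeability would give an inclusion probability of exactly $\min\{1,k/j\}$. Under the structured permutation ranks, the same upper bound still holds: for each node $u$, the instances appearing in $u$'s ranks are drawn without replacement from a common pool, which only decorrelates ranks negatively and cannot inflate the probability that several small ranks concentrate on pairs close to $v$. I would invoke the monotonicity argument already developed in~\cite{binaryinfluence:CIKM2014} for combined reachability sketches rather than re-derive it, since the rank construction is identical.

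Summing the per-position inclusion probabilities yields
\[
\E[|\cADS(v)|]\;\le\;\min\{k,\ell\}\;+\sum_{j=\ell+1}^{n\ell}\min\{1,k/j\},
\]
and the remaining step is a routine two-case analysis, using the standard estimate $\sum_{j=a+1}^{b}1/j\le\ln(b/a)$. When $\ell\ge k$, the sum is at most $k\ln n$, giving a total of $k+k\ln n$. When $\ell<k$, positions $j\in(\ell,k]$ are automatically included (there being fewer than $k$ strictly closer pairs), contributing $k-\ell$, while the tail contributes at most $k\ln(n\ell/k)$, for a total of $k+k\ln(n\ell/k)$. In either case the bound simplifies to at most $k\ln(n\min\{k,\ell\})$ (using $k\ge e$, which is implicit in the regime of interest), establishing the lemma. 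The one delicate point in this plan is the per-position inclusion bound under structured rather than i.i.d.\ ranks; that is the step I would check carefully (or cite explicitly from~\cite{binaryinfluence:CIKM2014}), since the summation and case analysis afterwards are mechanical.
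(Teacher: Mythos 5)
Your proposal is correct and follows essentially the same argument as the paper: order the candidate node--instance pairs by increasing distance, bound the inclusion probability of the $j$th pair by $\min\{1,k/j\}$ via exchangeability, and sum the resulting harmonic-type series. The only real difference is bookkeeping --- the paper restricts the sum upfront to the $n\min\{k,\ell\}$ pairs whose ranks are among the bottom-$k$ for their node (which is where $\min\{k,\ell\}$ enters), whereas you sum over all $n\ell$ pairs and recover the same factor through your two-case analysis; both routes land on the stated bound (given $k\geq e$), and both proofs equally defer the one delicate point you correctly flag, namely that the structured permutation ranks do not inflate the per-position inclusion probability relative to i.i.d.\ uniform ranks.
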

\begin{proof}
We consider all 
node-instance pairs $(v,i)$ such that 
$r^{(i)}_v \in \text{{\sc bottom-}}k\{ r^{(j)}_v | j\in [\ell] \}$
by increasing distance from $u$. 
The probability that the $j$th item contributes to $\cADS(u)$ is 
the probability that its rank is amongst the $k$ smallest in the first 
$j$ nodes, which is $\min\{1,k/j\}$.  Summing over $i\leq |R_u|$ we obtain the bound. 
\end{proof}
} 

Note that  we can also define $\cADS$ sketches with respect to a
probabilistic model. The definition
  emulates working with an infinite set of instances generated
  according to the model.  Since there are at most $nk$ distinct rank
  values in the sketches, and they are all from the first $nk$
  structured permutation ranks, the entries in the sketches are
  integers in $[nk]$.

\notinproc{
\subsection{Estimating Influence} \label{infest:sec}

  Our estimators use 
the {\em HIP threshold}, $\tau_v(x)$, defined with respect to a node $v$ and a positive distance value $x> 0$~\cite{ECohenADS:PODS2014}:
\begin{eqnarray}
\tau_v(x) &=& \kth_{(u,j) \mid d^{(j)}_{vu} < x} r^{(j)}_{u}
\nonumber \\ 
&=& \kth\{ r \mid (r,d)\in \cADS(v) \text{ and } 
d < x \}.\label{threshrelADS:eq}
\end{eqnarray}
The value $\tau_v(x)$ is the $k$th smallest  rank value amongst 
  pairs $(y,j)$ whose distance $d^{(j)}_{vy}$ is smaller than $x$.
If there are fewer than $k$
  pairs with distance smaller than $x$, then the
  threshold is defined to be the maximum value in the rank domain.
Note that, since the $\cADS$ contains the $k$ smallest ranks within 
 each distance, $\tau_v(x)$ is  also 
  the $k$th smallest amongst such pairs that are in $\cADS(v)$.
  Therefore, the threshold values $\tau_v(x)$
can be  computed  from $\cADS(v)$ for all $x$.  

\notinproc{
  The HIP threshold has the following interpretation.
For a node-instance pair $(u,i)$, 
$\tau_v(d^{(i)}_{vu})$ is
the largest rank value $r^{(i)}_u$ that would allow the (rank of the)  pair 
to be included in $\cADS(v)$, conditioned on fixing the ranks of all 
other pairs.  
We now can consider the probability that the 
pair $(y,j)$ is included in $\cADS(v)$, fixing the ranks of all 
pairs other than $(y,j)$. This is exactly the probability that a 
random rank value is smaller than the HIP threshold $\tau_v(x)$.
In particular, if $\tau_v(x)$ is the domain maximum, the inclusion
probably is $1$.    We refer to this probability as the HIP inclusion probability.
}

When ranks are uniformly drawn from $[0,1]$ the
HIP inclusion probability is equal to the HIP threshold $\tau_v(x)$ and we use
the same notation.
When we work with integral structured ranks, we 
divide them by $n\ell$ to obtain values in $[0,1]$.

We now present the estimator for the influence 
\begin{eqnarray} 
\lefteqn{\INF(\{\cascade{i}\},S)=\frac{1}{\ell} \sum_{(v,i)}
  \max_{u\in S} \alpha(d^{(i)}_{uv})} \nonumber\\
&=&  \frac{1}{\ell} \sum_{(v,i)} \alpha(\min_{u\in S}d^{(i)}_{uv})=
\frac{1}{\ell} \sum_{(v,i)} \alpha(d^{(i)}_{Sv}) \label{relinf:eq}
\end{eqnarray}
from $\{\cADS(u) \mid u\in S\}.$
We first discuss $|S|=1$.  We use the HIP estimator
\cite{ECohenADS:PODS2014}
\begin{equation} \label{singleseedest}
\widehat{\INF}(\{\cascade{i}\},u)=\alpha(0)+\frac{1}{\ell}\sum_{(r,d)\in \cADS(u) \mid d>0}
\frac{\alpha(d)}{\tau_u(d)},
\end{equation}
which is the sum over ``sampled'' pairs $(r,d)$ (those 
included in $\cADS(u)$) of the contribution $\alpha(d)/\ell$ of the pair to
the influence of $u$, divided by the HIP inclusion probability of the
pair.\onlyinproc{ We can show that this estimator has CV at most
  $1/\sqrt{2k-2}$ (see full version)}\notinproc{
\begin{lemma}
The estimator \eqref{singleseedest} is unbiased and has
CV that is at most $1/\sqrt{2k-2}$. 
\end{lemma}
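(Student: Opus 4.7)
The plan is to establish unbiasedness via a Horvitz--Thompson argument and bound the variance by exploiting the martingale-like structure of HIP inclusion, then appeal to moment calculations for order statistics of uniform ranks, following the ADS analysis of~\cite{ECohenADS:PODS2014} adapted to the multi-instance combined setting.

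First, for unbiasedness, I would decompose the true influence as $\INF(\{\cascade{i}\},u)=\alpha(0)+\frac{1}{\ell}\sum_{(v,i): d^{(i)}_{uv}>0}\alpha(d^{(i)}_{uv})$ and show that each positive-distance term is estimated in an unbiased way. Fix a pair $(v,i)$ with $d^{(i)}_{uv}=d>0$ and condition on all ranks except $r^{(i)}_v$. By definition of $\tau_u(d)$ in \eqref{threshrelADS:eq}, this threshold depends only on pairs strictly closer than $d$, so it is determined by the conditioning. The pair is in $\cADS(u)$ exactly when $r^{(i)}_v<\tau_u(d)$, which (under $r^{(i)}_v\sim U[0,1]$) occurs with probability $\tau_u(d)$. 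The conditional expected contribution is then $\tau_u(d)\cdot\alpha(d)/(\ell\,\tau_u(d))=\alpha(d)/\ell$, and taking total expectation and summing over pairs recovers the true influence.

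For the variance, I would order all positive-distance node-instance pairs by increasing $d^{(i)}_{uv}$ and view the estimator as a sum of martingale differences with respect to the filtration generated by the ranks of closer pairs. Each term $Y_{v,i}=\mathbf{1}[(r^{(i)}_v,d)\in \cADS(u)]\,\alpha(d)/(\ell\,\tau_u(d))$ has, conditional on the filtration, mean $\alpha(d)/\ell$ and variance $\alpha(d)^2(1-\tau_u(d))/(\ell^2\,\tau_u(d))$; cross terms vanish because the inclusion indicator of a later pair is independent of earlier-pair data given $\tau_u(d)$. Thus $\var(\widehat{\INF})=\sum_{(v,i):d>0}\E[\alpha(d)^2(1-\tau_u(d))/(\ell^2\tau_u(d))]$.

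To turn this into the CV bound, I would bound $\E[1/\tau_u(d)-1]$ for each pair. When $j$ pairs are strictly closer than $(v,i)$, $\tau_u(d)$ is distributed as the $k$th order statistic of $j$ i.i.d.\ $U[0,1]$ variables (capped at $1$ if $j<k$), i.e.\ $\mathrm{Beta}(k,j-k+1)$ for $j\ge k$. Standard moment identities give $\E[(1-\tau)/\tau]\le 1/(k-1)$ uniformly over $j\ge k$, while for $j<k$ the term vanishes since $\tau=1$. Plugging this in yields $\var(\widehat{\INF})\le \frac{1}{(k-1)\ell^2}\sum_{(v,i):d>0}\alpha(d)^2$. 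The main obstacle is now relating $\sum\alpha(d)^2$ to $\INF^2$: using $\alpha(d)\le\alpha(0)$ and a Cauchy--Schwarz/telescoping argument along the distance ordering (as in~\cite{ECohenADS:PODS2014}), $\sum\alpha(d)^2$ can be bounded by $(2\ell \,\INF)\cdot \max\alpha(d)/2$, producing $\mathrm{CV}\le 1/\sqrt{2k-2}$. I would cite the tighter combinatorial bookkeeping in the single-instance ADS analysis, which extends verbatim here because the combined-ADS threshold $\tau_u(d)$ behaves identically to the single-instance HIP threshold once all node-instance pairs are treated as a single sampled population ordered by distance. Concentration then follows from the martingale-difference representation and Azuma/Freedman-type inequalities applied to the bounded increments $Y_{v,i}\le\alpha(0)/(\ell\tau_u(d))$.
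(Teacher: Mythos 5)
Your unbiasedness argument is exactly the paper's: condition on all ranks except $r^{(i)}_v$, note that the pair enters $\cADS(u)$ precisely when $r^{(i)}_v<\tau_u(d)$, which happens with probability $\tau_u(d)$, and apply the Horvitz--Thompson correction; that part is correct. The gap is in the variance bound. You claim that when $j$ pairs are strictly closer than $(v,i)$, the threshold $\tau_u(d)\sim\mathrm{Beta}(k,j-k+1)$ satisfies $\E[(1-\tau_u(d))/\tau_u(d)]\le 1/(k-1)$ \emph{uniformly} over $j\ge k$. This is false: for $\mathrm{Beta}(k,j-k+1)$ one has $\E[1/\tau_u(d)]=j/(k-1)$, hence $\E[(1-\tau_u(d))/\tau_u(d)]=(j-k+1)/(k-1)$, which equals $1/(k-1)$ only at $j=k$ and grows linearly in $j$. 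So the per-pair variance contributions are not each $O(1/(k-1))$ times $\alpha(d)^2$, and the inequality $\var(\widehat{\INF})\le\frac{1}{(k-1)\ell^2}\sum\alpha(d)^2$ that your whole final step rests on does not hold.

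Even if it did, the last step would only give $\mathrm{CV}\le 1/\sqrt{k-1}$: bounding $\sum\alpha(d)^2$ by $\max\alpha(d)\cdot\sum\alpha(d)$ cannot produce the factor of $2$. The $1/\sqrt{2k-2}$ bound in \cite{ECohenADS:PODS2014} comes precisely from pairing the \emph{growing} per-element variance $(j-k+1)/(k-1)$ with the fact that the contributions $\alpha_j$, listed by increasing distance, are non-increasing: the variance is $\frac{1}{(k-1)\ell^2}\sum_j (j-k+1)\alpha_j^2$ up to lower-order terms, and since $j\alpha_j\le\sum_{i\le j}\alpha_i$ one gets $\sum_j j\alpha_j^2\le\sum_j\alpha_j\sum_{i\le j}\alpha_i=\tfrac12\bigl(\sum_j\alpha_j\bigr)^2+\tfrac12\sum_j\alpha_j^2$, which is where the $2$ in the denominator comes from. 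The paper's own proof is a one-paragraph deferral to exactly this computation (``the variance analysis is very similar to \cite{ECohenADS:PODS2014}''), so to repair your writeup you should drop the uniform moment bound and instead carry out this summation over the distance ordering. Your reduction of the multi-instance case to a single population of node-instance pairs ordered by distance, the vanishing of cross-covariances via the martingale-difference structure, and the concentration claim are all fine as stated.
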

\begin{proof}
A node always influences itself (the only node of distance $0$ from
it), and the estimate for that contribution is $\alpha(0)$.  We apply the HIP estimator of \cite{ECohenADS:PODS2014} to estimate the 
contribution of nodes with positive distance from $u$.  For a pair 
$(v,i)$, the HIP 
estimate is $0$ for pairs not in $\cADS(u)$.  When the pair is in
$\cADS(u)$, we can compute the HIP probability $\tau_u(d^{(i)}_{uv})$
and obtain the estimate
$\alpha(d^{(i)}_{uv})/\tau_u(d^{(i)}_{uv})$.  Since we are considering
node-instance pairs, we divide by the number of instances $\ell$.
The variance analysis is very similar to \cite{ECohenADS:PODS2014}.
\end{proof}
} 


We now consider a seed set $S$ with multiple nodes. 
The simplest way to handle such a set is to
generate the \emph{union $\cADS$},
which is a $\cADS$ computed with respect to the minimum 
distances from any node in $S$.  The union $\cADS$ can be computed by
merging the $\cADS$ of the seed nodes \onlyinproc{and then applying
  the single-seed estimator to that. }\notinproc{using a similar procedure to
Algorithm~\ref{ADScomb:alg} (in Appendix~\ref{timedoracle:ps:alg}) which will be presented in
Section \ref{combads:sec}.
We then estimate the 
contribution of the nodes in $S$ by $|S|\alpha(0)$ and estimate the contribution of all nodes that have a positive distance from $S$ by 
applying the HIP estimator to the entries in the union $\cADS$.   This
estimator has the worst-case bounds on estimation
quality claimed in Theorem \ref{timedinforacles:thm}, but discards a
lot of information in the union of the sketches which could be used to
tighten the estimate.

\subsubsection{Optimal Oracle Estimator}}
The estimator we propose and implement \onlyinproc{instead }uses the information in 
 the sketches of nodes in $S$ in an optimal way. This means the
 variance can be smaller, up to a factor of $|S|$, than that of the
 union estimator.\onlyinproc{ Details are given in the full version.}\notinproc{  A pseudocode is 
 provided as Algorithm~\ref{TIoracle:alg} in Appendix~\ref{timedoracle:ps:alg}. 
The estimator first computes the set $Z$ of rank values $r$ that
appear with distance $0$ in at least one sketch.
 These ranks correspond to node-instance pairs involving a seed node.
For each rank value $r$ that appears in at least one sketch in $S$ and is
not in $Z$ (has positive distance in all sketches), we build the set $T_r$ of 
threshold-contribution pairs that correspond to occurrences of $r$ in sketches of $S$. 
We then compute from $T_r$
the sorted {\em skyline} (Pareto set)
$\Skylines{r}$ of $T_r$.   The skyline $\Skylines{r}$ includes a pair $(\tau,\alpha)\in T_r$ 
if and only if the pair is not {\em dominated} by any other pair. 
That is, any pair with a larger $\tau$ value must have a smaller 
$\alpha$ value.
We compute $\Skylines{r}$ from
$T_r$ as follows. We first sort $T_r$ lexicographically, first by decreasing
$\tau$, and then if there are multiple entries with same $\tau$, by
decreasing $\alpha$. We then obtain $\Skylines{r}$ by a linear scan of the sorted $T_r$ which
removes pairs with a lower $\alpha$ value than the maximum $\alpha$
value seen so far.
The entries of the computed $\Skylines{r}$ are sorted by 
decreasing $\tau_j$ (and increasing $\alpha_j$).  

  For each $r$ for which we computed a skyline (appears in at least one
  sketch of a node in $S$ and is not in $Z$), we apply the 
\lstar\ estimator
 \cite{sorder:PODC2014}  
 to the sorted $\Skylines(r) = \{(\tau_j,\alpha_j)\}$. 
The pseudocode for \lstar\ tailored to our
application is in Algorithm~\ref{Lest:alg} in
Section~\ref{timedoracle:ps:alg}, and details
on the derivation and applicability of the estimator 
are provided in Appendix \ref{oracleLstar:sec}.

Finally, the influence
estimate $\widehat{\INF}(\{\cascade{i}\},S)$ returned by
Algorithm~\ref{TIoracle:alg} has two components. 
The first summand ($|S|\alpha(0)$) is the contribution of the seed nodes
themselves. 
The second component is the sum, over all node-instance pairs of
positive distance from $S$, of their estimated contribution to the
influence (normalized by the number of instances $\ell$).  We estimate
this by the sum of the \lstar\ estimates applied to $\Skylines(r)$.
} 

}
\notinproc{
\subsection{Computing the Set of cADS} \label{combads:sec}

 We compute a set of $\cADS$ sketches by computing a set of $\ADS$
 sketches $\ADS^{(i)}(v)$ for each instance
 $i$~\cite{ECohen6f,ECohenADS:PODS2014}.
The computation of sketches for all nodes $v$ in a single instance 
 can be done using
{\sc Pruned Dijkstra}s~\cite{ECohen6f,bottomk07:ds} or the node-centric {\sc Local Updates}~\cite{ECohenADS:PODS2014}.

An $\ADS$ is a $\cADS$ of a single instance and has the same basic
form: a list of rank-distance pairs sorted by increasing distance. 
It can have, however, at most one entry of distance $0$.  

For each node $v$, we compute $\cADS(v)$ by combining $\ADS^{(i)}(v)$
for all instances $i$.\notinproc{A pseudocode for combining two rank-distance lists to a $\cADS$
format list is provided as 
Algorithm \ref{ADScomb:alg}.} The algorithm can be applied repeatedly
to $\ADS^{(i)}(v)$ and the current $\cADS(v)$, or in any combination
order of rank-distance lists to obtain  the same end result.

The computation of the set of $\cADS$ sketches is dominated by computing a set of 
All-Distances Sketches \cite{ECohen6f,bottomk07:ds,ECohenADS:PODS2014}
in each instance.  The computation for instance $i$ takes $O(k
|E^{(i)}| \log n)$ time.

\ignore{
\begin{algorithm2e}[h]\caption{$\cADS$ Merge (OLD)}
\DontPrintSemicolon
\SetKwFunction{Merge}{merge}
 \Merge{$A_1,A_2$} \tcp{Combine two sorted rank distance lists $A_1$ and $A_2$ into a 
 single $\cADS$ format list}\BlankLine
$R_0 \gets \{x \mid  (x,0)\in A_1 \cup A_2\}$\;
$L \gets$ merge positive distance entries in $A_1 
\cup A_2$ by increasing distance \;
$Q\gets \perp$ \tcp{Initialize MAX priority queue of 
  size $k$}\;
$A \gets \{(r,0) \mid r\in \text{{\sc bottom-}}k \{R_0\} \}$\;
\For {$r \in R_0$}{insert $r$  into priority queue $Q$\;}
\While {$L\not=\emptyset$}
{
$(r,d)\gets L.next$\;
\If {$r< L.\max$ or $|L|<k$} 
{
Insert $(r,d)$ to $A$\;
Insert $r$ to $L$\;
\If{$|L|>k$}{Remove max element from $L$}
}
}
\Return{$A$}\;
\end{algorithm2e}
}

} 
\ignore{
\subsection{Sketches for a distance-based IC model}

 We build $\cADS$ sketches for a distance-based IC model as follows.
We first assign $k$ structured permutation ranks from $[nk]$ to each node.
At each step we choose the next node in rank order and perform 
Dijkstra on a random transpose instance generated according to the
model.  
The $\cADS$ of visited 
 nodes are augmented when the rank is one of the $k$ smallest amongst 
 nodes that are at least as close.  However, we need to propagate the
 computation regardless until all nodes have sufficient $\cADS$
 sketches.
}

\begin{algorithm2e}
\caption{Distance-based IM  (\alphaskim) \label{alphaskim:alg}}
{\scriptsize 
\DontPrintSemicolon 
\KwIn{Directed graphs $\{G^{(i)}(V,E^{(i)},w^{(i)})\}$, $\alpha$}
\KwOut{Sequence of node and marginal influence pairs}
\SetKwArray{Qgrand}{Qpairs} 
\SetKwArray{Qcands}{Qcands} 
\SetKwArray{Qhml}{Qhml} 
\SetKwArray{Index}{index}
\SetKwArray{SeedList}{seedlist}
\SetKwArray{Rank}{rank}
\SetKwArray{cdelta}{$\delta$}
\SetKwArray{HM}{HM}
\SetKwArray{ML}{ML}
\SetKwFunction{NextSeed}{NextSeed} 
\SetKwFunction{MoveUp}{MoveUp} 
\SetKwFunction{MoveDown}{MoveDown} 
\SetKwFunction{UpdateReclassThresh}{UpdateReclassThresh} 
\SetKwArray{est}{Est}
\tcp{Initialization}
$\Rank \gets $ map node-instance pairs 
$(v,i)$ to $j/(n\ell)$ where $j\in [n\ell]$;\\
\lForAll(\tcp*[f]{List of  $(u,d^{(i)}_{uv})$ scanned by reverse Dijkstra $(v,i)$}){node-instance pairs $(v,i)$}{$\Index{v,i} \leftarrow \perp$}
\lForAll(\tcp*[f]{Distance from $S$}){pairs $(u,i)$}{$\cdelta{u,i} \gets  \infty$} 
\lForAll{nodes~$v$}{$\est.H[v] \leftarrow 0$; $\est.M[v]\gets 0$}
$\SeedList \leftarrow \perp$\tcp*{List of seeds \& marg.\ influences} 
\lForAll(\tcp*[f]{Initialize $\Qgrand$}){node-instance pairs $(v,i)$}{Insert 
$(v,i)$ to $\Qgrand$ with priority 
$\alpha(0)/\Rank{v,i}$}
$s\gets 0$;  $\tau  \gets \alpha(0) n\ell/(2k)$;
$coverage \gets 0$ \tcp*{coverage of current seed set} 
\While{$coverage < n\ell \alpha(0)$}
{
\tcp{Build PPS samples of marginal influence sets until confidence in 
  next seed}
\While{$((x,\hat{I}_x) \gets \NextSeed()) =\perp$}
{
$\tau \gets \tau \lambda$;
$\MoveUp{}$ \tcp*{Update est. components} 
\ForAll{pairs $(v,i)$  in $\Qgrand$ with priority $\geq \tau$}
{Remove $(v,i)$ from $\Qgrand$ \;
 Resume reverse Dijkstra from $(v,i)$.  
\ForEach{scanned node $u$ of distance $d$}
{
$c \gets \alpha(d)-\alpha(\cdelta{v,i})$;\\
\lIf{$c\leq 0$}{Terminate rev. Dijkstra from $(v,i)$}
\If{$c/\Rank{v,i} < \tau$}{
  place $(v,i)$ with priority $c/\Rank{v,i}$ in $\Qgrand$;
 Pause reverse Dijkstra from $(v,i)$
}
\Else(\tcp*[f]{$c/\Rank{v,i}\geq \tau$}){
\Append $(u,d)$ to $\Index{v,i}$ \\
\lIf(\tcp*[f]{H entry}){$c \geq 
  \tau$}{$\increase{\est.H[u]}{c}$}\lElse{\tcp*[f]{M 
    entry} \\ $\increase{\est.M[u]}{1}$
 \\ \If{$\HM{v,i}=\perp$}{$\HM{v,i}\gets |\Index{v,i}|$;  Insert $(v,i)$ with priority $c$ to $\Qhml$}
}
Update the priority of $u$ in $\Qcands$ to $\est.H[u]+\tau \est.M[u]$
}
}
}}
\tcp{Process new seed node $x$}
$I_x \gets 0$ \tcp*{Exact marginal influence} 
\ForEach{instance $i$}{Perform a forward Dijkstra from $x$ in 
  $G^{(i)}$.  \ForEach{visited node $v$ at distance $d$}{\lIf{$d \geq 
      \cdelta{v,i}$}{Prune}\Else{
$\decrease{\text{priority} (v,i) \text{ in 
  }\Qgrand}{\frac{\alpha(d)-\alpha(\cdelta{v,i})}{\Rank{v,i}}}$;\\
\lIf{priority of $(v,i)$ in $\Qgrand$ $\leq0$}
{terminate rev. Dijkstra $(v,i)$ and remove $(v,i)$
      from $\Qgrand$}
       \MoveDown{$(v,i), \cdelta{v,i}, d$}; \\ $\increase{I_x}{\alpha(d)-\alpha(\cdelta{v,i})}$; $\cdelta{v,i} \gets d$ }
}}
$\increase{s}{1}$; $\increase{coverage}{I_x}$; \SeedList.\Append{$x$,$\hat{I}_x/\ell$,$I_x/\ell$}
}
\Return{\SeedList}
} 
\end{algorithm2e}

\section{Influence Maximization} \label{timedIM:sec}

We present \alphaskim\ (pseudocode in Algorithm \ref{alphaskim:alg}), 
which computes an approximate greedy sequence for distance-based influence 
with respect to an arbitrary non-increasing $\alpha$. 
The input to \alphaskim\ is a set of instances $\mathcal{G}$ and a 
decay function $\alpha$. 
The output is a sequence of nodes, so that each prefix approximates 
the maximum influence seed set of the same size.

Like exact greedy (Section~\ref{model:sec}) and \tskim\
(Section~\ref{thresh:sec}), \alphaskim\ maintains a residual problem, specified by
the original input $\mathcal{G}$ and distances $\delta^{(i)}_v$.  It
also maintains, for each node, a sample of its influence
set, weighted by the respective contribution of each element.  
The sampling is governed by a global sampling threshold $\tau$,
which inversely determines the inclusion probability in the sample
(the lower $\tau$ is, the larger is the sample).
The weighted sample has the same role as the partial sketches
maintained in \tskim, as it allow us to estimate the influence of nodes.

At a high level,  \alphaskim\  alternates between two subroutines.
The first subroutine examines the influence estimates of nodes.  
We pause if we have sufficient confidence that the node with the maximum
{\em estimated} influence (in the current residual problem)
 has actual influence that is sufficiently
close to the maximum influence.  Otherwise, we decrease 
$\tau$, by multiplying it by a (fixed)
$\lambda < 1$ (we used $\lambda=0.5$ in our implementation),  extend the samples,  and update the estimates on
the influence of nodes to be with respect to the new threshold $\tau$.
We pause only when we are happy with the node with maximum estimated influence.

The second subroutine is invoked when a new node $x$ is selected to be added to
the seed set;
\alphaskim\ updates the residual problem, that is, the distances
$\delta$ and the samples.

\onlyinproc{See the full version for more details on \alphaskim\ and its analysis.
\begin{theorem} 
using $k=O(\epsilon^{-2} \log n)$, \alphaskim\ runs in expected time 
$$O\left(\frac{\log^3 n}{\epsilon}\sum_{i=1}^\ell |E^{(i)}| + \frac{\log^3
  n}{\epsilon^3}n+ \frac{\log n}{\epsilon^2}|\bigcup_{i=1}^\ell
E^{(i)}|\right)$$ and returns a
sequence of seeds so that for each prefix $S$ of size $s$, with high
probability,
\begin{equation} \label{approxguarantee}
\INF(\mathcal{G},S) \geq (1-(1-1/s)^s -\epsilon) \max_{ U \mid |U|= s}
\INF(\mathcal{G},U).
\end{equation}
\end{theorem}
}
We provide an overview of our presentation of  the components of \alphaskim.
In Section \ref{pps:sec}  we precisely
define the weighted samples we use.
 In Section \ref{index:sec} we
present our main data structure, $\Index$,  which stores the (inverted)
samples.
The building of $\Index$, which dominates the computation, is done
using applications of pruned reverse Dijkstra, discussed in Section 
\ref{rdijkstra:sec}. \notinproc{The selection of the next seed node 
is detailed in Section~\ref{nextseed:sec}.}
The samples are defined with respect to the current
residual problem and the sampling threshold $\tau$.  Therefore, they need to be
updated when $\tau$ is decreased or 
when a new seed node is selected.  
While the new estimates can always be computed by simply scanning $\Index$,
this is inefficient.
In Section \ref{updatepps:sec} we present additional
structures which support efficient updates of samples and estimates.
\notinproc{ Finally,
Section \ref{alphaworstcase:sec} includes a
worst-case analysis.}

\subsection{PPS Samples of Influence Sets} \label{pps:sec}

  We start by specifying the sampling scheme, which is the core of
  our approach.
 The sample we maintain for each node $u$ is a 
{\em Probability Proportional to
  Size} (PPS) sample of all
node-instance pairs $(v,i)$, where the weighting is with respect to 
the contribution values  $\Delta^{(i)}_{uv}$, as defined in
Equation~\eqref{Deltadef:eq}.  
Recall from Equation~\eqref{minfdef} that the influence, which we are estimating from
the sample, is the sum of these contributions.
PPS sampling can be equivalently defined with respect to a threshold
$\tau$~\cite{SSW92}:  Each
entry $(v,i)$ has an independent $r^{(i)}_v \sim U[0,1]$ and it is included in
  the sample of $u$ if
\begin{equation} \label{ppssampcond}
\frac{\Delta^{(i)}_{uv}}{r^{(i)}_v} \geq \tau. 
\end{equation}

From the PPS sample we can unbiasedly estimate the influence of $u$,
using the classic inverse-probability estimator~\cite{HT52}.
We denote by $H_u$ the set of all pairs $(v,i)$ such that $\Delta^{(i)}_{uv}
\geq \tau$ and by $M_u$ the set of all other sampled pairs, that is,
those where $\Delta^{(i)}_{uv} \in [r^{(i)}_v \tau,\tau)$.
Note that pairs in $H_u$ are sampled with probability $1$, whereas pairs in $M_u$
are sampled with probability $\frac{\Delta^{(i)}(u)}{\tau}$ (this is
the probability of having a rank value so that \eqref{ppssampcond} is satisfied).
The estimate is the sum of the ratio of contribution to inclusion probability:
\begin{equation} \label{ppsinfest:eq}
\widehat{\INF}((\mathcal{G},\delta),u) = \frac{1}{\ell} \left(\tau|M_u|+ \sum_{(v,i) \in H_u}
\Delta^{(i)}_{uv}\right).
\end{equation}
With PPS sampling, when $\tau$ is low enough so that the estimate is
at least $k\tau$, which always happens when we have $k$ samples, 
the CV is at most 
$1/\sqrt{k}$. 
The estimate is also well concentrated according to the Chernoff
bound. 

We note that our use of PPS sampling, rather than uniform sampling, 
is critical to performance with general $\alpha$.  When using (for example)
exponential or polynomial decay, the positive contributions of different pairs to the influence of a
node can vary by orders of magnitude.  Therefore, 
we must use weighted sampling, where
heavier contributions are more likely to be sampled,  to obtain
good accuracy with a small sample size.
With threshold influence, in contrast, contributions were
either $0$ or $1$, which meant  that we could get good performance with
uniform sampling.

The PPS samples we will maintain for 
different nodes are computed with respect to 
the same threshold $\tau$. 
The samples are also
{\em coordinated}, meaning that the same values $r^{(i)}_v$ are used 
   in the samples of all nodes.  Coordination also helps us to
   maintain the samples when the contribution values $\Delta$ are
   modified, since the sample itself minimally changes to reflect the
   new values \cite{BrEaJo:1972,Ohlsson:2000,multiw:VLDB2009}.
In our implementation, node-instance pairs are assigned structured 
random permutation ranks, which are integers in $[n\ell]$, and for
permutation rank $h$ we use
$\Rank{v,i}\equiv r^{(i)}_v = h/(n\ell)$. 
\begin{table*}[t]
\centering
\small
\caption{Performance of TSKIM using~$k=64$, $\ell=64$, and exponentially distributed edge weights. We evaluate the influence on 512~(different) sampled instances for thresholds 0.1 and 0.01.}
\label{tab:tskim}
\begin{tabular}{@{}lrrrrrrrrrrrr@{}}
\toprule 
&&& \multicolumn{4}{c}{\textsc{Influence [\%]}} & \multicolumn{6}{c}{\textsc{Running Time [sec]}}\\
\cmidrule(lr){4-7}\cmidrule(l){8-13}
&&& \multicolumn{2}{c}{50 seeds} & \multicolumn{2}{c}{1000 seeds} & \multicolumn{2}{c}{50 seeds} & \multicolumn{2}{c}{1000 seeds} & \multicolumn{2}{c}{$n$~seeds} \\
\cmidrule(lr){4-5}\cmidrule(lr){6-7}\cmidrule(lr){8-9}\cmidrule(lr){10-11}\cmidrule(l){12-13}
instance & \#\,nodes & \#\,edges & 0.01 & 0.1 & 0.01 & 0.1 & 0.01 & 0.1 & 0.01 & 0.1 & 0.01 & 0.1\\
\midrule 
\notinarxiv{\instance{AstroPh} & 14,845 & 239,304 & 1.02 & 19.17 & 9.96 & 39.25 & 0.9 & 2.0 & 2.0 & 4.0 & 3.7 & 6.6\\
\instance{Epinions} & 75,888 & 508,837 & 0.53 & 8.52 & 2.88 & 12.68 & 2.0 & 5.2 & 6.3 & 11.1 & 14.1 & 21.3\\
\instance{Slashdot} & 77,360 & 828,161 & 0.72 & 19.97 & 3.90 & 25.04 & 1.9 & 14.6 & 7.6 & 27.9 & 18.9 & 40.5\\
\instance{Gowalla} & 196,591 & 1,900,654 & 0.62 & 14.13 & 1.93 & 17.61 & 4.4 & 21.8 & 14.8 & 36.9 & 47.6 & 81.7\\
\instance{TwitterF's} & 456,631 & 14,855,852 & 0.20 & 19.38 & 1.64 & 24.26 & 9.9 & 133.4 & 36.4 & 269.6 & 269.9 & 648.4\\
\instance{LiveJournal} & 4,847,571 & 68,475,391 & 0.07 & 9.16 & 0.33 & 13.81 & 34.6 & 606.0 & 117.5 & 1,244.4 & 1,983.4 & 4,553.9\\
\instance{Orkut} & 3,072,627 & 234,370,166 & 2.82 & 74.44 & 4.61 & 77.47 & 779.7 & 5,490.5 & 1,788.7 & 11,060.7 & 7,360.9 & 24,520.3\\
}
\onlyinarxiv{}
\bottomrule 
\end{tabular}
\end{table*}
\notinarxiv{
\begin{figure*}[tb]
\centering
\includegraphics[page=1]{plots-timed/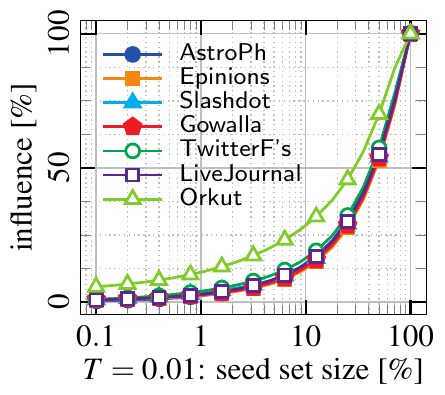}%
\includegraphics[page=3]{plots-timed/tskim_coverage.pdf}\hfill%
\includegraphics[page=2]{plots-timed/tskim_coverage.pdf}%
\includegraphics[page=4]{plots-timed/tskim_coverage.pdf}%
\caption{\label{fig:tskim001}\label{fig:tskim01}Evaluating influence permutations~(left) and running times~(right) on several instances for threshold decays 0.01 and 0.1. The legend applies to all plots.}
\end{figure*}
}
\onlyinarxiv{
\begin{figure*}[!t]
\centering
\includegraphics[page=1]{tskim_coverage.pdf}%
\includegraphics[page=3]{tskim_coverage.pdf}\hfill%
\includegraphics[page=2]{tskim_coverage.pdf}%
\includegraphics[page=4]{tskim_coverage.pdf}%
\caption{\label{fig:tskim001}\label{fig:tskim01}Evaluating influence permutations~(left) and running times~(right) on several instances for threshold decays 0.01 and 0.1. The legend applies to all plots.}
\end{figure*}}


\subsection{The Index Structure} \label{index:sec}
 The main structure we maintain is $\Index$, which can be viewed as an
inverted index of the PPS samples (but technically can include entries that
 used to be included in the PPS sample and may still be relevant).
For each node-instance pair $(v,i)$, $\Index{v,i}$ is
an ordered list of node-distance pairs $(u,d^{(i)}_{uv})$.  Note that
typically the lists are empty or very small for most pairs $(v,i)$.
The  list is ordered by 
increasing $d^{(i)}_{uv}$, which is the order in which a 
reverse Dijkstra algorithm
performed from $v$ on the
 graph $G^{(i)}$ (with  all edges reversed)
 scans new
 nodes. List $\Index{v,i}$ always stores (a prefix) of the scanned nodes,
 in scanning order.  It
 always includes 
 all nodes $u$ for which the pair 
$(v,i)$ is included in the PPS sample of $u$, that is, nodes $u$ that 
satisfy~\eqref{ppssampcond}. 
Each list $\Index{v,i}$ is trimmed from its tail so that it only contains
 entries $(u,d)$ where $\Delta^{(i)}_{uv} > 0$, that is,
$\alpha(d)- \alpha(\delta^{(i)}_v) > 0$. This is because other
entries have no contribution to the
marginal influence of $u$.
Note that the lists 
are always a prefix of the Dijkstra scan order, and once they 
are trimmed (from the end), they do not grow, and the respective
reverse Dijkstra computation never resumed,  even if $\tau$ decreases.

Each list $\Index{v,i}$  is logically viewed as having three
consecutive parts (that could be empty).
  The H part of the list are all entries $(u,d)$ with 
  $\alpha(d)-\alpha(\delta^{(i)}_v) \geq \tau$.  These are the nodes
  $u$ for which the pair $(v,i)$ contributes to the $H_u$ part of the
  PPS sample.
The M part of the list 
  are all entries with    $\alpha(d)-\alpha(\delta^{(i)}_v) \in
  [r^{(i)}_v\tau, \tau)$, which include nodes $u$ for which $(v,i)$
  contributes  to $M_u$.
 Finally, the L part of 
the list includes nodes for which
$0<\alpha(d)-\alpha(\delta^{(i)}_v) < r^{(i)}_v  \tau$.  The L nodes
do not currently include $(v,i)$ in the PPS sample of their influence
sets, but are still relevant, since this may change when $\tau$ decreases.

To support efficient updates of this classification,
we maintain $\HM{v,i}$ and $\ML{v,i}$, which contain the 
positions in the list $\Index{v,i}$ of the first $M$ and the first $L$ items (and are empty if
there are no M or L items, respectively).

 To efficiently compute the influence estimates, we maintain
for each node $u$ the values
\begin{eqnarray*}
\est.H[u] &=& \sum_{(v,i) \mid \Delta^{(i)}_{uv} \geq \tau}
\Delta^{(i)}_{uv} \\
\est.M[u] &=& | \{ (v,i) \mid  \Delta^{(i)}_{uv} \in [ r^{(i)}_v\tau,\tau)\}|
\end{eqnarray*}
The PPS estimate \eqref{ppsinfest:eq} on the influence of
$u$ is 
\begin{equation} \label{ppsalgest:eq}
\frac{1}{\ell}(\est.H[u]+ \tau \est.M[u])\ .
\end{equation}

\subsection{Reverse Dijkstra Computations} \label{rdijkstra:sec}

 We build the samples using reverse Dijkstra
  computations starting at node-instance pairs $(v,i)$.
The computation is from source $v$ in the transpose graph of
  $G^{(i)}$ and reveals all nodes $u$ for which the
  pair $(v,i)$ is included in the PPS sample for $u$ as defined in
  \eqref{ppssampcond}.
The nodes scanned by the reverse Dijkstra on $[v,i]$ are maintained as $\Index{v,i}$, in the same order. 
The computation for $(v,i)$ is 
paused once the distance $d$ from the source satisfies 
\begin{equation} \label{pauserule:eq}
\alpha(d)-\alpha(\delta^{(i)}_v) < r^{(i)}_v \tau. 
\end{equation}
The computation 
may resume when $\tau$ is decreased and the pause rule 
\eqref{pauserule:eq} no longer holds. 
It is not hard to verify that this pause rule suffices to obtain all entries of $(v,i)$ in 
the PPS samples of nodes. 
When the depth $d$ satisfies $\alpha(d)-\alpha(\delta^{(i)}_v) \leq
0$, the computation of the reverse Dijkstra $(v,i)$ is (permanently)
terminated, releasing all auxiliary data structures.
Note that the reverse Dijkstra computations for different pairs are
paused and resumed according to the global
threshold $\tau$, and can be performed concurrently.  

The algorithm
maintains ``state'' for all active Dijkstras.
We use the notation $\mu(v,i)=d$ for the next distance the reverse 
Dijkstra from $(v,i)$ would process when resumed.  Initially,
$\mu(v,i)=0$. 
  In order to efficiently determine the pairs $(v,i)$ for which
  reverse Dijkstra needs to be resumed, we maintain a 
max priority queue
$\Qgrand$ over 
  node-instance pairs $(v,i)$, prioritized by 
\begin{equation} \label{Qgrandpriority}
\frac{\alpha(\mu(v,i))-\alpha(\delta^{(i)}_v)}{r^{(i)}(v)}\ . 
\end{equation} 
This priority is the sampling threshold that is
  required to get $(v,i)$ into the PPS sample of the next 
node to be scanned by the reverse Dijkstra of $(v,i)$. 
We only need to resume the reverse Dijkstra $(v,i)$ when its priority
\eqref{Qgrandpriority} is at least $\tau$.  Note that the priority of
a pair $(v,i)$ can only decrease over time, when $\delta^{(i)}_v$ decreases or
when the reverse Dijkstra progresses and $\mu(v,i)$ increases.  This
allows us to maintain $\Qgrand$ with lazy updates.

In order to determine, after we decrease $\tau$, all the pairs for which the reverse 
Dijkstra computation should resume (or start), we simply extract 
all the top elements of  the queue $\Qgrand$ which have priority at least $\tau$.
These top elements are removed from $\Qgrand$.
The reverse Dijkstra is resumed on each removed pair $(v,i)$ until the
pause rule holds again, that is, we reach a distance $d$ such that 
$\alpha(d)-\alpha(\delta^{(i)}_v) < {r^{(i)}_v} \tau$.  At this  point
the reverse Dijkstra is terminated or paused.  If it is paused, we set 
 $\mu(v,i) \gets d$, and  the pair $(v,i)$ is placed in $\Qgrand$ with the new priority \eqref{Qgrandpriority}.

Note that the resume and pause rules of the reverse Dijkstras are 
  consistent with identifying all sampled pairs according to 
 \eqref{ppssampcond}, ensuring correctness.
\ignore{
The updated PPS influence estimates for all
nodes could always be computed by linearly scanning the current $\Index$ structure,
which stores the ordered lists of entries produced by the
  paused reverse Dijkstra computations. We can classify each entry as an H or an M entry
  and add its contribution to the estimated influence of the
  respective node.  The influence estimates change, however, when
  $\tau$ is decreased and when new seed nodes are added (and $\delta$
  values are decreased). In the sequel we
  introduce additional structures that allow us to do these updates efficiently.
}
 \notinproc{
\subsection{Selecting the Next Seed Node } \label{nextseed:sec}

  The algorithm decreases $\tau$  until we have sufficient confidence
  in the node with maximum estimated marginal influence.   The
  selection of the next node into the seed set is given in the
  pseudocode $\NextSeed$ in Appendix~\ref{app:ps:funcalpha}.

  We first discuss how we determine when we are happy with the maximum
  estimate.  When looking at a particular node, and 
the value of the estimate is at least  $\tau k$,  we know that the
value is well concentrated with CV that is $1/\sqrt{k}$.  We are,
however, looking at the maximum estimate among $n$ nodes.
To ensure an expected relative error of $\epsilon$ in the worst case, 
we need to apply a union bound and use $k=O(\epsilon^{-2}\log n)$.  The union bound ensures that the estimates
for all nodes have a well concentrated maximum error of $\epsilon$ times the maximum
influence.  In particular, the estimated maximum has a relative error
of $\epsilon$ with good concentration.

 In practice, however, the
influence distribution is skewed and therefore the union bound is too
pessimistic~\cite{binaryinfluence:CIKM2014}.  Instead, we propose the
following adaptive approach which yields tighter bounds for realistic instances.
Consider the node $u$ with maximum estimated marginal influence
$\hat{I}_u$  and let $I_u$ be its exact
marginal influence $I_u$.  The exact marginal influence can be compute,
using $\MargGain{u}$ (Section \ref{model:sec}), and is also computed anyway
when $u$ is added to the seed set.

The key to the adaptive approach is the following observation.
When working with a parameter $k=\epsilon^{-2}$ and the maximum
estimate is at least $k\tau$, then under-estimates  of the true
maximum are still well 
concentrated but over-estimates  can be large. Fortunately, however, over-estimates are 
easy to test for by comparing $I_u$ and $\hat{I_u}(1-\epsilon)$.

  We can use this as follows.   In our experiments 
we run the algorithm with a
  fixed $k$, always selecting the node with maximum estimate when the
  estimate exceeds $k\tau$.  We obtain, however, 
much tighter confidence interval than through the worst case
  bound. In particular, 
for the sequence of computed seeds, we track the sum $Er=\sum_u \max\{0,
(1-\epsilon)\hat{I}_u-I_u\}$.  This sum is added as a fixed component
to the confidence bound, which means that with high probability the
error for any prefix size is $Er$ plus a well concentrated value around $\epsilon$ times
the estimate.  In particular, the approximation ratio we obtain with
good probability for a prefix $s$ is at least $1-(1-1/s)^s -\epsilon- Er/\widehat{\INF}$.

Alternatively, (this is in the pseudocode) we can perform seed selection with respect to a
specified accuracy $\epsilon$.   Thus obtaining approximation ratio of
$1-(1-1/s)^s -\epsilon$ with good probability.
We use $k$ that is $\epsilon^{-2}$ but
when we have a candidate (a node with maximum estimated marginal gain which
exceeds $k\tau$) we  apply 
$\MargGain{u}$ to compute its exact marginal influence $I_u$.
If we find that $\hat{I}_u \geq (1-\epsilon) I_u$, we do not select
$u$ and instead decrease $\tau$ which returns to sampling.
Otherwise, we select $u$ into the seed set.

  We now consider tracking the node with
  maximum estimated influence.  To do that efficiently, we work with 
a max priority queue $\Qcands$, which contains nodes prioritized by their estimated 
influence, or more precisely, by $\est.H[u] + \tau \est.M[u]$.  
For efficiency, we use lazy updates, which allows priorities not to be
promptly updated when the estimate changes and instead, 
updated only for
elements which are the current maximum in the queue.

The function $\NextSeed$ repeats the following until a node is 
selected or $\perp$ is returned.  
If the maximum estimated priority in $\Qcands$ is less than $\tau k$, we return 
$\perp$. 
Otherwise, we remove from $\Qcands$
the node $u$ with maximum priority, compute $\hat{I}_u \gets \est.H[u]
+ \tau \est.M[u]$, and check if $\hat{I}_u$ is at lest the current 
maximum entry on the queue and is 
larger than $\tau k$. 
When using the first option, we simply return $(u,\hat{I}_u)$.  When working 
with a specified error, we 
further test $u$ as follows.  We compute the actual marginal 
 gain $I_u$ of $u$.  If it is lower than  $(1-\epsilon)\hat{I}_u$, we place $u$ back in $\Qcands$ with priority $I_u$
 and return $\perp$.  Otherwise, we return $u$.

  We now consider maintaining $\Qcands$.
Priorities are promptly updated to estimate values only when there is
a possibility of increase in the estimate.  This is necessary
for correctness of the priority queue when
using lazy updates.
The estimated influence of a node $u$ can only decrease 
when $\delta$ decreases for an element in the sample of $u$  (as a 
result of adding a new seed) or when 
$\tau$ decreases and no new entries are added to the sample.
Therefore, in these cases, we do not update the priorities promptly.
The priority of $u$ in $\Qcands$ is promptly updates when 
(as a result of decreasing $\tau$) a new 
entry is added to the sample of $u$.  

Note that our algorithm always maintains the estimation components $\est.H[u]$
and $\est.M[u]$ updated  for all nodes $u$, so the current estimate for
a node $u$ can at any time
be quickly computed from these two values and the current $\tau$.

 Lastly, as in the pseudocode, Algorithm \ref{alphaskim:alg} is executed
 to exhaustion, until the seed set influences all nodes. 
  The algorithm can be stopped also when the seed set $S$ is of
  certain desired size or  when a desired coverage
  $\sum_{u\in S}  I_u/(n\ell\alpha(0))$ is achieved.

} 

\subsection{Updating PPS Estimate Components} \label{updatepps:sec}

 The positions  $\HM{v,i}$ and 
  $\ML{v,i}$, and accordingly  the classification of entries $(u,d)$
  as H,M, or L,
 can be updated both when $\tau$ or $\delta^{(i)}_v$ decreases.  
When $\tau$ decreases, entries can only 
  ``move up:''  L entries can become M or H entries and M entries can 
  become H entries.  New entries can also be generated by a reverse
  Dijkstra on $(v,i)$.  Newly generated entries are always H or M entries.
  When $\delta^{(i)}_v$ decreases, entries can 
  ``move down.''  In addition, entries at the tail of $\Index{v,i}$,
  those with $\alpha(d)  \leq \alpha(\delta^{(i)}_v)$, get removed. 

  When an entry $(u,d)$ changes its classification, or when a new
  entry is generated by a reverse Dijkstra, we may need to update
the estimate components $\est.H[u]$ and $\est.M[u]$.

\notinproc{\subsubsection{Initial Updates When $\tau$ Decreases}}
\onlyinproc{{\noindent {\bf Initial Updates When $\tau$ Decreases}}}
  When $\tau$ decreases, we first (before resuming the reverse
  Dijkstra's) need to update the classification of existing entries and the implied 
  changes on $\est$.  \notinproc{The pseudocode for this update is provided as 
  the function $\MoveUp{}$ in Appendix~\ref{app:ps:funcalpha}.}

  To efficiently identify the $\Index$ lists that have entries that change their classification,
we maintain a max priority queues $\Qhml$. It contains node-instance pairs with priority equal to the 
\emph{reclassification threshold}, the highest $\tau$ that would require 
reclassification of at least one entry in the list.  
The procedure $\UpdateReclassThresh$ computes the reclassification
threshold for a pair $(v,i)$ and places the pair with this priority in $\Qhml$.
When $\tau$ is decreased, we only need to process lists of pairs that
are at the top of the queue
$\Qhml$. 

Lastly, we discuss the processing of a list $(v,i)$ which requires
reclassification.
The reclassification, the updates
of the estimation components $\est$, and the update of the
reclassification threshold using
$\UpdateReclassThresh{v,i}$,  are all
performed in computation that is proportional 
to the number of reclassified entries.  In particular, processing does not require
scanning the full list $\Index{v,i}$. This is enabled by the
pointers $\HM{v,i}$ and $\ML{v,i}$.

\notinproc{\subsubsection{New Scanned Node}}
\onlyinproc{{\noindent {\bf New Scanned Node}}}
The estimation components also need to be updated when a new entry
is appended to the $\Index{v,i}$ list when running a
reverse Dijkstra for $(v,i)$. The pseudocode for this update is included in
Algorithm~\ref{alphaskim:alg}.   The new scanned node $u$ with
distance $d$ creates a new entry $(u,d)$ which is appended
to the end of $\Index{v,i}$.  
A new entry is always an H or M entry (otherwise the pause rule applies).
If $\HM{v,i} \not=\perp$, that is, there is at least one M entry
in $\Index{v,i}$, the new entry must also be an M entry.  In this case,
we increase $\est.M[u]$ by $1$.  If $\HM{v,i}=\perp$, we check if
$(c \gets \alpha(d)-\alpha(\delta^{(i)}_v)) \geq \tau$.  If so, the new entry
is an H entry and we increase $\est.H[u]$ by $c$.  Otherwise, it is
the first M entry.  We set
$\HM{v,i} \gets |\Index{v,i}|-1$, $\est.M[u]=1$, and 
insert the pair $(v,i)$ to the queue $\Qhml$ with priority $c$.
After updates are completed, we recompute the estimated influence
$\est.H[u]+\tau\est.M[u]$ of the node $u$ and update accordingly the
priority of $u$ in $\Qcands$.

\notinproc{\subsubsection{New Seed Node}}
\onlyinproc{{\noindent {\bf New Seed Node}}}
  When a new seed node $u$ is selected, 
we perform a forward Dijkstra from the seed in each instance $i$.  We update $\delta^{(i)}_v$ at
  visited nodes $v$ and compute the exact marginal influence of the new
  seed.  
The forward Dijkstra is pruned at nodes $v$  with $\delta^{(i)}_v$ that is
smaller or equal to their distance from $u$.  When we update 
$\delta^{(i)}_v$, we also may need to reclassify entries
in $\Index{v,i}$, update the positions $\HM{v,i}$ and $\ML{v,i}$ 
 and update estimation components $\est.H[u]$ and $\est.M[u]$ of reclassified
entries $(u,d)$ .  \onlyinproc{We refer to this update as the function
  $\MoveDown()$. Pseudocode is provided in the full version.}\notinproc{A pseudocode for this
update is provided as the function $\MoveDown()$ in Appendix~\ref{app:ps:funcalpha}.}
  We also update the priority of the pair
$(v,i)$ in $\Qhml$ and decrease its priority in $\Qgrand$ to
$(\alpha(\mu(v,i))-\alpha(\delta^{(i)}_v))/r^{(i)}_v$ (since we do not track
$\mu(v,i)$ explicitly in the pseudocode, we instead decrease the
priority to reflect the decrease in $\delta^{(i)}_v$). If the
updated  priority in $\Qgrand$ is
$\leq 0$, the reverse Dijkstra of $(v,i)$ is terminated and it is
removed from $\Qgrand$.
Note that in this update, entries can only be reclassified ``down:''
E.g. an entry $(u,d)$ that was in $H$ can move to
$M$, $L$, or be purged, if $\alpha(d) \leq \alpha(\delta^{(i)}_v)$.

\ignore{
\subsection{Implementation Comments}
 A simple acceleration we use in our implementation is to replace
$\alpha(x)$ by $I_{\alpha(x) > \epsilon 
  \alpha(0)} \alpha(x)$, where $I$ is the indicator function.  That 
is, setting $\alpha(x)$ to $0$ when 
$\alpha(x)\leq \epsilon \alpha(0)$.   This reduces the number of
distance updates which do not contribute much to the influence
computation.
}

\notinproc{
\subsection{Analysis} \label{alphaworstcase:sec}

When we run the algorithm with fixed
$k=O(\epsilon^{-2} \log n)$\notinproc{ or use the adaptive approach in seed
selection (as detailed in Section \ref{nextseed:sec})}, we have the
following guarantee on the approximation quality:
\begin{theorem} \label{alphaaccuracy:thm}
 \alphaskim\ returns a
sequence of seeds so that for each prefix $S$ of size $s$, with high
probability,
\begin{equation} \label{approxguarantee}
\INF(\mathcal{G},S) \geq (1-(1-1/s)^s -\epsilon) \max_{ U \mid |U|= s}
\INF(\mathcal{G},U).
\end{equation}
\end{theorem}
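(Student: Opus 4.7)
The plan is to reduce the theorem to the approximate greedy framework of Section~\ref{approxgreedy:sec}, and then show that each selection made by \alphaskim\ in fact satisfies the hypothesis of that framework with high probability. Recall that framework guarantees: if in every iteration the chosen node has marginal influence (in the residual problem) within a $(1-\epsilon)$ factor of the true maximum marginal influence, with error that is $\epsilon$ in expectation and well concentrated, then the resulting sequence satisfies \eqref{approxguarantee}. Thus the proof reduces to a concentration bound on the quality of the node picked in each iteration.

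First, I would pin down the estimator used for the decision. A seed is only committed in $\NextSeed$ when the maximum estimate $\hat{I}_u$ in $\Qcands$ exceeds $\tau k$, and in this regime the PPS estimator \eqref{ppsalgest:eq} has CV at most $1/\sqrt{k}$ with Chernoff-type concentration (as noted after \eqref{ppsinfest:eq}). Setting $k=\Theta(\epsilon^{-2}\log n)$ and applying a union bound over the $n$ candidate nodes yields that, with probability at least $1-1/\mathrm{poly}(n)$, every node's estimate deviates from its true marginal influence by at most a $(1\pm\epsilon/2)$ factor at the moment of decision. A further union bound over the at most $n$ iterations (or equivalently, over all $\tau$-thresholds visited in the geometric schedule $\tau\gets \lambda\tau$, of which there are $O(\epsilon^{-1}\log n)$ relevant ones) preserves high-probability success; this works because the samples are coordinated via fixed ranks $r^{(i)}_v$ and each pair $(v,i)$ contributes independently to the PPS estimator of each $u$.

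Given this uniform accuracy, I would conclude in one line: if $u^*$ is the true maximizer of marginal influence in the residual problem and $u$ is the node selected, then
\[
\INF((\mathcal{G},\delta),u) \;\geq\; \hat I_u/(1+\epsilon/2) \;\geq\; \hat I_{u^*}/(1+\epsilon/2) \;\geq\; (1-\epsilon)\INF((\mathcal{G},\delta),u^*),
\]
with the required concentration properties inherited from the PPS Chernoff bound. For the adaptive variant described in Section~\ref{nextseed:sec}, the same inequality holds deterministically because we compute $I_u$ exactly via $\MargGain{u}$ and reject $u$ unless $I_u > (1-\epsilon)\hat I_u$; the rejection can occur at most $O(\epsilon^{-1}\log n)$ times per seed (bounded by the number of $\tau$-reductions), so the overall guarantee is unaffected. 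Invoking the approximate greedy theorem then yields \eqref{approxguarantee}.

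The main obstacle I expect is dependency management across iterations: the sample used to estimate the marginal influence of a candidate node $u$ at iteration $s$ depends on the whole prior history (the residual distances $\delta$ evolve, entries get reclassified, and reverse Dijkstras are paused/resumed). To handle this cleanly, I would argue that for any fixed realization of the ranks $\{r^{(i)}_v\}$ the set of pairs satisfying the PPS inclusion rule \eqref{ppssampcond} is determined by the current residual problem alone, so that the inverse-probability estimator~\eqref{ppsinfest:eq} is unbiased conditional on the current $\delta$ and $\tau$, and Chernoff concentration still applies. The union bound over iterations and nodes then goes through, completing the proof.
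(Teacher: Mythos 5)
Your proposal is correct and follows essentially the same route as the paper's (very terse) proof: a union bound over nodes and steps on the concentration of the PPS estimate to show each selected seed has marginal influence within $(1-\epsilon)$ of the maximum, followed by an appeal to the approximate-greedy variant of the classic submodular maximization bound. The extra care you take with the adaptive rejection step and the cross-iteration dependencies only fills in details the paper leaves implicit.
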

\notinproc{
\begin{proof}
The algorithm, with very high probability, selects a node with
marginal influence that is at least $1-\epsilon$ of the maximum one.
This follows from a union bound over all steps and nodes of
the quality of the estimate obtained from a PPS sample.
 We then apply an approximate variant (e.g., \cite{binaryinfluence:CIKM2014}) of the
classic proof \cite{submodularGreedy:1978}  of the approximation ratio of {\sc Greedy} for
monotone submodular problems.
\end{proof}
}

  The worst-case bound on the (expected) running time of
  \alphaskim is as follows.\notinproc{The proof is provided in Appendix
  \ref{askimtime:proofs}.}
\notinproc{ We  note that our design exploits properties of real 
  instances and the poly-logarithmic overheads are not observed in experiments.}
\begin{theorem} \label{alphatime:thm}
\alphaskim\ runs in expected time 
$$O\left(\frac{\log^3 n}{\epsilon}\sum_{i=1}^\ell |E^{(i)}| + \frac{\log^3
  n}{\epsilon^3}n+ \frac{\log n}{\epsilon^2}|\bigcup_{i=1}^\ell E^{(i)}|\right)$$
providing the guarantee \eqref{approxguarantee}.
\end{theorem}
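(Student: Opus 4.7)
The plan is to decompose the expected running time of \alphaskim\ into four categories and bound each using Theorem~\ref{randupdatebound:thm} together with standard PPS sample-size analysis. The categories are: (A) the forward Dijkstra computations triggered by each seed addition, which drive all decreases of $\cdelta{v,i}$ and the corresponding $\MoveDown$ reclassifications; (B) the reverse Dijkstra computations that populate the $\Index{v,i}$ lists; (C) the $\MoveUp$ reclassifications triggered by $\tau$-decreases and the associated updates to $\est.H$ and $\est.M$; and (D) maintenance of the three priority queues together with the $\MargGain(u)$ calls issued by $\NextSeed$ for candidate testing.

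For (A), each decrease of $\cdelta{v,i}$ corresponds to exactly one scan of $v$ by a forward Dijkstra in instance $i$, which relaxes $\text{out-deg}^{(i)}(v)$ edges at heap cost $O(\log n)$ per relaxation. Theorem~\ref{randupdatebound:thm} bounds the expected number of such decreases per pair by $O(\epsilon^{-1}\log^2 n)$; summing over pairs yields $O(\epsilon^{-1}\log^3 n \sum_i |E^{(i)}|)$, matching the first term of the bound.

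For (B) and (C), I would combine the PPS sample-size bound with a count of reclassifications. When sampling concludes, the cumulative PPS sample size across all nodes is $O(nk)$ in expectation, so the total number of $(u,v,i)$ triples ever inserted into any $\Index{v,i}$ is $O(nk)$. Each such entry can be reclassified at most polylogarithmically many times, because each reclassification is forced either by a decrease of $\cdelta{v,i}$ (bounded per pair by Theorem~\ref{randupdatebound:thm}) or by a $\tau$-threshold crossing (the geometric schedule with ratio $\lambda$ produces only $O(\log n)$ distinct values of $\tau$ down to where sampling halts). Plugging $k=\Theta(\epsilon^{-2}\log n)$ gives the second term $O(n\epsilon^{-3}\log^3 n)$ for category (C). For category (B), I charge each edge relaxation performed by a reverse Dijkstra either to a PPS sample slot that it eventually fills or to the subsequent termination of that reverse Dijkstra under the pause rule; aggregated, this assigns $O(k)$ work to each edge in $\bigcup_i E^{(i)}$, yielding the third term $O(\epsilon^{-2}\log n \, |\bigcup_i E^{(i)}|)$. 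Category (D) is then absorbed into the previous bounds: each priority queue operation costs $O(\log n)$, and insertions occur at most proportionally to reclassifications, reverse Dijkstra scans, or $\delta$-updates.

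The main obstacle will be making the reclassification count in (C) tight in the presence of interleaved decreases of $\tau$ and $\cdelta{v,i}$: taken separately each source is bounded, but their interleaving could in principle cause repeated oscillation of a single entry among the H, M, and L classes. The resolution exploits the monotonicity of both $\tau$ and $\cdelta{v,i}$, so the classification of an entry $(u,d) \in \Index{v,i}$---depending only on comparisons of $\alpha(d)-\alpha(\cdelta{v,i})$ against $\tau$ and $r^{(i)}_v \tau$---traces a monotone path in a two-dimensional grid whose length is at most the product of the per-pair $\delta$-update count and the number of geometric $\tau$-levels. A secondary amortization argument shows that the $\MargGain$ invocations in $\NextSeed$ for candidate rejection are dominated by (A), since each rejection can be charged to one geometric drop of $\tau$ that follows.
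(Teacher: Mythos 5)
Your decomposition is essentially the paper's own: the forward searches are bounded exactly as in the paper, by combining Theorem~\ref{randupdatebound:thm} with the observation that each decrease of $\delta^{(i)}_v$ accounts for one node scan and its outgoing relaxations (Lemma~\ref{fdijkstracost:lemma}); the $O(\log n)$ geometric levels of $\tau$ between its initial value $n\alpha(0)\ell/k$ and the stopping value $\epsilon\alpha(0)\ell/k$ play the same role (Lemma~\ref{taubound:lemma}); and the reverse-search and reclassification costs are charged to sample entries as in Lemmas~\ref{samplemembers:lemma}--\ref{movedowncost:lemma}. Your category (A), which yields the dominant first term, is the paper's argument verbatim.

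The one step that does not go through as stated is the claim that the total number of triples $(u,v,i)$ ever inserted into the index is $O(nk)$, from which you charge only $O(k)$ reverse-search work per edge in category (B). The $O(nk)$ figure is the instantaneous total sample size; over the whole execution the samples turn over, because each of the $O(\log n)$ decreases of $\tau$ admits a fresh batch of $O(k)$ entries per node in expectation (Lemma~\ref{samplesize:lemma}) while earlier entries are evicted as $\delta^{(i)}_v$ drops. The correct cumulative count is $O(nk\log n)$ distinct entries (the paper's Lemma~\ref{samplemembers2:lemma}), and it is this count---not the final sample size---that bounds how many times a node is scanned by reverse Dijkstras and hence the number of reverse edge relaxations; with the $O(\log n)$ heap cost per scan this gives $O\bigl(k\log^2 n\sum_v\max_{i}\deg^{(i)}(v)\bigr)$ (Lemma~\ref{reversecost:lemma}), so your per-edge charge undercounts by logarithmic factors. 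The same correction matters for your category (C), where the paper's accounting is also sharper than your grid-path bound: up-reclassifications are tied only to $\tau$-decreases, each resetting at most $kn$ entries to class H, and an entry then moves down at most three times before it is purged, giving $O(nk\log n)$ total reclassifications (Lemma~\ref{samplemembers:lemma}) rather than a per-entry product of the $\delta$-update count with the number of $\tau$-levels; the $\delta$-update count $X$ enters only through the separate $O(Xnk\log n)$ bound on the move-down processing (Lemma~\ref{movedowncost:lemma}).
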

 } 

\ignore{
\section{Adaptive confidence bounds} \label{confidence:sec}

{\bf We start working here with what we proposed in
  \cite{binaryinfluence:CIKM2014}.  It requires putting more things in
  the code, like tracking the size of the second largest sketch.  We
  work with that and see estimates accuracy as seeds are added.
  After fine tuning there,  we cook up strategies of
  changing $k$ parameter on the go. }
}

\section{Experiments} \label{sec:exp}

Our algorithms were implemented in C++ and compiled using Visual Studio 2013 with full optimization. Our test machine runs Windows 2008R2 Server and has two Intel Xeon E5-2690 CPUs and 384\,GiB of DDR3-1066 RAM. Each CPU has 8 cores~(2.90\,GHz, 8\,$\times$\,64\,kiB L1, 8~$\times$\,256\,kiB, and 20\,MiB L3 cache). For consistency, all runs are sequential.

\notinproc{
The datasets in our experiments are obtained from the SNAP~\cite{SNAP} project and represent \emph{social}~(\instance{Epinions}
, \instance{Slashdot}~\cite{lldm-cslnn-09}%
, \instance{Gowalla}~\cite{cho2011friendship}%
, \instance{TwitterFollowers}~\cite{DLMM:SR13}%
, \instance{LiveJournal}~\cite{BHKL:KDD2006}%
, \instance{Orkut}~\cite{YL:ICDM2012})
and \emph{collaboration}~(\instance{AstroPh}~\cite{LeskovecKF:TKDD07})
networks. All these graphs are unweighted. 
}
\onlyinproc{
The datasets in our experiments are obtained from the SNAP~\cite{SNAP} project and represent \emph{social}~(\instance{Epinions}
, \instance{Slashdot}
, \instance{Gowalla}
, \instance{TwitterFollowers}
, \instance{LiveJournal}
, \instance{Orkut}
and \emph{collaboration}~(\instance{AstroPh}
networks. All these graphs are unweighted. 
}

\Xcomment{
Kempe et al.~\cite{KKT:KDD2003} proposed two natural ways of
associating probabilities with edges in the reachability-based IC model: the
\emph{uniform} scheme assigns a constant probability~$p$ to each
directed edge~(they used~$p=0.1$ and~$p=0.01$), whereas in the
\emph{weighted cascade}~(wc) scheme the probability is the inverse of
the degree of the head node~(making the probability that a node is
influenced less dependent on its number of neighbors). We consider
the wc scheme by default, but we will also experiment with the uniform
scheme~(un). These two schemes are the most commonly tested in
previous studies of
scalability~\cite{Leskovec:KDD2007,CWY:KDD2009,CWW:KDD2010,JHC:ICDM2012,OAYK:AAAI2014,TXS:sigmod2014}.}

Unless otherwise mentioned, we 
test our algorithms
using~$\ell = 64$ independent instances generated from the graph
by assigning independent random length to every edge according to an exponential 
distribution with expected value~1 \cite{Gomez-RodriguezBS:ICML2011,ACKP:KDD2013,CDFGGW:COSN2013}. \notinproc{To do so, we sample a value~$x$ uniformly at random from the range $(0,1]$, then set the edge length to~$-\ln x$.}
We use ADS parameter~$k = 64$.

\subsection{Distance-Based Influence Maximization}

We start with the Influence Maximization problem. Recall that we consider two variants of this problem: threshold influence and general distance-based influence. We discuss each in turn. 

\subsubsection{Threshold Influence}

\ignore{
For threshold influence and some threshold~$T$, we set~$\alpha(x) = 1$ for~$x \leq T$ and $\alpha(x) = 0$ otherwise. We consider $T = 0.01$ and $T = 0.1$ in our experiments. 
}

Our first experiment considers the performance of \tskim\
(\notinproc{Algorithm~\ref{Tskim:alg}, }Section \ref{thresh:sec}),
which finds a sequence of seed nodes such that each prefix of the
sequence approximately maximizes the influence. Our results are
summarized in Table~\ref{tab:tskim}. For each dataset, we first
report its total numbers of nodes and edges. This is followed by the
total influence~(as a percentage of the total number of nodes of the
graph) of the seed set found by our algorithm. We report figures for
50 and 1000 seeds~and for threshold values $T=0.01$ and $T=0.1$. Finally, we
show the total running time of our algorithm when  it is stopped after
computing an approximate greedy sequence of 50, 1000, or
\emph{all~$n$ nodes}.
 Note that we omit the respective influence figure for the seed set
 that contains all nodes,  since it is~100\% by definition.

The table shows that, unsurprisingly, the higher threshold has higher
influence values.  This is because the coverage function is monotone
non-decreasing in $T$.   The running time of our algorithm depends on that influence~(since its graph searches must run for longer), but it is still practical even for fairly large thresholds and even if we compute the entire permutation. For the largest graph we test~(\instance{Orkut}), with hundreds of millions of edges, we can compute the top 50~seeds in less than 15~minutes, and order all nodes in a few hours using a single CPU core.

Figure~\ref{fig:tskim001} presents a more detailed perspective on the same experiment. It shows, for~$T = 0.01$ and~$T = 0.1$, how total influence and the running times depend on the size of the seed set. We note that the first few seeds contribute with a disproportionate fraction of the total influence, particularly with~$T = 0.1$, and an even higher percentage of the total running time. The overall shape of the curves is quite similar, with \instance{Orkut} as a noticeable outlier: its first few seeds contribute relatively more to the overall influence than in other instances. Note that~\instance{Orkut} is also the densest instance in our testbed.

\notinarxiv{
\begin{figure}
\includegraphics{plots-timed/nips}%
\hfill 
\includegraphics[page=1]{plots-timed/comparison}%
\caption{\label{fig:nips_and_error}Left: Comparing \tskim\ to ConTinEst. Right: Error of \tskim\ and \alphaskim.}%
\end{figure}
}
\onlyinarxiv{
\begin{figure}
\includegraphics{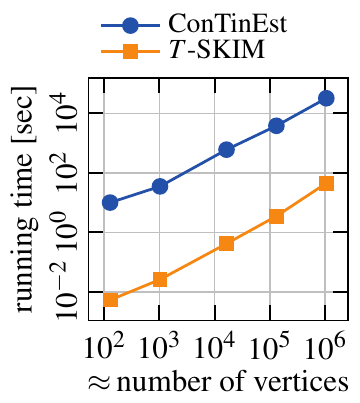}%
\hfill 
\includegraphics[page=1]{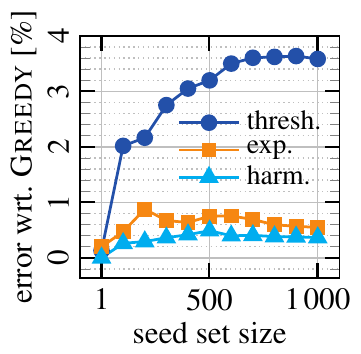}%
\caption{\label{fig:nips_and_error}Left: Comparing \tskim\ to ConTinEst. Right: Error of \tskim\ and \alphaskim.}%
\end{figure}
}

We now compare \tskim\ to ConTinEst, the algorithm by Du et al.~\cite{DSGZ:nips2013}. Although their sequential implementation is publicly available, we were unable to run it on our inputs within reasonable time. (A preliminary test on \instance{AstroPh}, our smallest instance, did not produce any output within five hours.) Note that to evaluate graphs with more than 1024 vertices, they actually use a distributed implementation, which they run on a cluster with 192 cores. Unfortunately, we had access neither to such a cluster nor to the distributed implementation of their algorithm.

In order to still be able to make some comparison, we generated the same instances as they used in their evaluation: core-periphery Kronecker networks~\cite{SNAP}~(parameter matrix: [0.9 0.5; 0.5 0.3]) of varying size, using the Weibull distribution \notinproc{$$f(x, \lambda, \beta) = \frac{\beta}{\lambda} \Big(\frac{x}{\lambda}\Big)^{\beta-1} e^{-(x/\lambda)^\beta}$$} for the edge lengths~\cite{lawless2011statistical}. (Note that~$\lambda$ controls scale and~$\beta$ shape.) For each edge we chose~$\lambda$ and~$\beta$ uniformly at random from~$(0,10]$. We ran the same experiment as they did, setting~$|S| = 10$, and $T = 10$. Figure~\ref{fig:nips_and_error} (left) shows the running times for Kronecker networks of varying size. We observe that our approach run on a single CPU core is consistently about 3 orders of magnitude faster than their algorithm run on a cluster. Unfortunately, we were not able to compare the computed influence, as those figures are not reported in~\cite{DSGZ:nips2013}.

\subsubsection{General Distance-Based Influence}

We now evaluate \alphaskim, a more general version of our IM algorithm that can handle arbitrary decay functions. For this experiment, we consider both harmonic and exponential decay functions, the most commonly used in the literature. To test harmonic decay, we use~$\alpha(x) = 1 / (10x + 1)$; for exponential decay, we use~$\alpha = e^{-10x}$. These functions turn out to give interesting influence profiles. In \alphaskim\ we initialize~$\tau$ to~$n\ell/k$ and set~$\lambda$ to~0.5.

Table~\ref{tab:askim} shows, for both functions, the influence values~(in percent) obtained by \alphaskim\ for 50~and 1000~seeds, as well as the corresponding running times.

\begin{table}[!h]
\centering
\caption{Performance of $\alpha$-SKIM using~$k=64$,~$\ell=64$, and exponentially distributed edge weights for 50 and 1000 seeds. We use exponential~(exp.:~$\alpha\colon x \mapsto e^{-10x}$) and harmonic~(harm.:~$\alpha\colon x \mapsto 1/(10x+1)$) decay functions.}\label{tab:askim}
 \small
\setlength{\tabcolsep}{.9ex}
\begin{tabular}{@{}lrrrrrrrr@{}}
\toprule
 & \multicolumn{4}{c}{\textsc{Influence~[\%]}} & \multicolumn{4}{c}{\textsc{Running time~[sec]}}\\
\cmidrule(lr){2-5}\cmidrule(l){6-9}
& \multicolumn{2}{c}{50 seeds} & \multicolumn{2}{c}{1000 seeds} & \multicolumn{2}{c}{50 seeds} & \multicolumn{2}{c}{1000 seeds}\\ 
\cmidrule(lr){2-3}\cmidrule(lr){4-5}\cmidrule(lr){6-7}\cmidrule(l){8-9}
instance&exp.&harm.&exp.&harm.&exp.&harm.&exp.&harm.\\
\midrule
\notinarxiv{\instance{AstroPh} & 17.6 & 31.4 & 33.5 & 44.9 & 15 & 15 & 43 & 40\\
\instance{Epinions} & 7.6 & 14.9 & 11.2 & 18.2 & 35 & 40 & 93 & 99\\
\instance{Slashdot} & 16.9 & 29.1 & 21.3 & 32.8 & 104 & 88 & 238 & 224\\
\instance{Gowalla} & 13.1 & 25.9 & 15.9 & 28.2 & 166 & 213 & 323 & 455\\
\instance{TwitterF's} & 16.0 & 26.3 & 19.7 & 29.2 & 1,500 & 1,387 & 2,459 & 2,816\\
\instance{LiveJournal} & 10.6 & 23.5 & 13.4 & 25.8 & 5,637 & 7,765 & 11,906 & 13,016\\
}
\onlyinarxiv{}
\bottomrule
\end{tabular}
\end{table}

The table shows that \alphaskim\ is slower than \tskim\ by up to an order of magnitude for comparable influence. In fact, if we ran \alphaskim\ with a threshold function~(not shown in the table), it would be about three times as slow as \tskim, while producing the exact same results. However, this is to be expected, since \alphaskim\ is a much more sophisticated~(and flexible) algorithm, which, unlike \tskim, can handle smooth decay functions with guarantees.

\notinarxiv{
\begin{figure*}
\centering
\includegraphics[page=1]{plots-timed/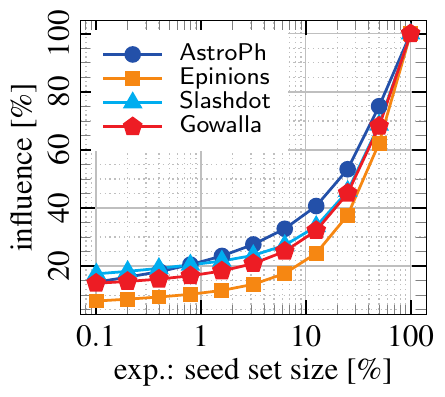}%
\includegraphics[page=3]{plots-timed/askim_coverage.pdf}\hfill%
\includegraphics[page=2]{plots-timed/askim_coverage.pdf}%
\includegraphics[page=4]{plots-timed/askim_coverage.pdf}%
\caption{\label{fig:askim}Evaluating influence permutations~(left) and running time~(right) on several instances for exponential~(exp.: $\alpha\colon x \mapsto e^{-10x}$) and harmonic~(harm.: $\alpha\colon x \mapsto: 1/(10x+1)$) decays. The legend applies to all plots.}
\end{figure*}
}
\onlyinarxiv{
\begin{figure*}
\centering
\includegraphics[page=1]{askim_coverage.pdf}%
\includegraphics[page=3]{askim_coverage.pdf}\hfill%
\includegraphics[page=2]{askim_coverage.pdf}%
\includegraphics[page=4]{askim_coverage.pdf}%
\caption{\label{fig:askim}Evaluating influence permutations~(left) and running time~(right) on several instances for exponential~(exp.: $\alpha\colon x \mapsto e^{-10x}$) and harmonic~(harm.: $\alpha\colon x \mapsto: 1/(10x+1)$) decays. The legend applies to all plots.}
\end{figure*}
}
\notinarxiv{
\begin{figure*}[t]
\centering 
\includegraphics[page=1]{plots-timed/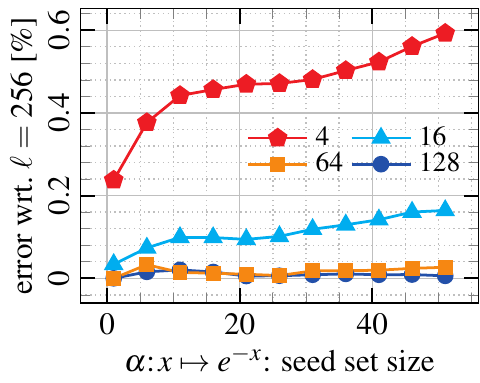}\hfil%
\includegraphics[page=2]{plots-timed/ell.pdf}\hfil%
\includegraphics[page=3]{plots-timed/ell.pdf}\hfil 
\caption{Evaluating different numbers of simulations~($\ell$-values) for different decay functions on \instance{AstroPh}.}\label{fig:ell}
\end{figure*}
}
\onlyinarxiv{
\begin{figure*}[t]
\centering 
\includegraphics[page=1]{ell.pdf}\hfil%
\includegraphics[page=2]{ell.pdf}\hfil%
\includegraphics[page=3]{ell.pdf}\hfil 
\caption{Evaluating different numbers of simulations~($\ell$-values) for different decay functions on \instance{AstroPh}.}\label{fig:ell}
\end{figure*}
}
\begin{table*} [h!]
\centering
\caption{Evaluating the distance-based influence oracle with~$\ell=64$.}\label{tab:toracle}
\setlength{\tabcolsep}{.65ex}
\small
\begin{tabular}{@{}lrrrrrrrrrrrrrrrrrrrr@{}}
\toprule
&\multicolumn{2}{c}{\textsc{Preprocessing}}&\multicolumn{6}{c}{\textsc{Queries with~$\alpha \colon x \mapsto e^{-10x}$}}&\multicolumn{6}{c}{\textsc{Queries with~$\alpha \colon x \mapsto 1/(10x+1)$}}&\multicolumn{6}{c}{\textsc{Queries with~$T=0.01$}}\\
\cmidrule(lr){2-3}\cmidrule(lr){4-9}\cmidrule(lr){10-15}\cmidrule(l){16-21}
&&&\multicolumn{2}{c}{1 seed}&\multicolumn{2}{c}{50 seeds}&\multicolumn{2}{c}{1000 seeds}&\multicolumn{2}{c}{1 seed}&\multicolumn{2}{c}{50 seeds}&\multicolumn{2}{c}{1000 seeds}&\multicolumn{2}{c}{1 seed}&\multicolumn{2}{c}{50 seeds}&\multicolumn{2}{c}{1000 seeds}\\
\cmidrule(lr){4-5}\cmidrule(lr){6-7}\cmidrule(lr){8-9}\cmidrule(lr){10-11}\cmidrule(lr){12-13}\cmidrule(lr){14-15}\cmidrule(lr){16-17}\cmidrule(lr){18-19}\cmidrule(l){20-21}
& time & space & time & err. & time & err. & time & err. & time & err. & time & err. & time & err. & time & err. & time & err. & time & err.\\
instance & [h:m] & [MiB] & [\textmu{}s] & [\%] & [\textmu{}s] & [\%] & [\textmu{}s] & [\%] & [\textmu{}s] & [\%] & [\textmu{}s] & [\%] & [\textmu{}s] & [\%] & [\textmu{}s] & [\%] & [\textmu{}s] & [\%] & [\textmu{}s] & [\%]\\
\midrule
\notinarxiv{\instance{AstroPh} & 0:10 & 149.2 & 38 & 7.2 & 9,695 & 1.2 & 229,340 & 0.5 & 31 & 4.4 & 9,152 & 4.1 & 227,943 & 0.5 & 27 & 1.1 & 8,855 & 0.4 & 204,551 & 2.8\\
\instance{Epinions} & 0:46 & 674.0 & 32 & 3.2 & 8,552 & 1.1 & 222,470 & 1.0 & 26 & 2.2 & 9,203 & 1.2 & 196,717 & 0.5 & 22 & 0.5 & 8,267 & 0.3 & 191,709 & 0.6\\
\instance{Slashdot} & 1:10 & 851.4 & 46 & 5.6 & 11,884 & 1.5 & 310,170 & 0.4 & 38 & 3.2 & 10,970 & 1.9 & 291,185 & 1.2 & 73 & 0.6 & 13,768 & 0.4 & 247,509 & 0.6\\
\instance{Gowalla} & 3:55 & 2,558.6 & 52 & 3.8 & 17,109 & 1.0 & 356,818 & 0.4 & 47 & 2.9 & 14,151 & 2.2 & 289,318 & 0.8 & 61 & 1.2 & 16,092 & 0.6 & 329,976 & 0.3\\
\instance{TwitterF's} & 19:33 & 6,165.1 & 51 & 3.8 & 13,816 & 1.4 & 365,366 & 0.7 & 42 & 2.6 & 13,166 & 1.5 & 379,296 & 0.9 & 39 & 2.3 & 13,912 & 0.7 & 360,766 & 0.2\\
}
\onlyinarxiv{}
\bottomrule
\end{tabular}
\end{table*}

Even though \alphaskim\ is slower, it is still practical. It scales well with the number of seeds (increasing from 50 to 1000 barely doubles the total running time) and can still handle very large graphs. 

Figure~\ref{fig:askim} presents a more detailed view of the same experiment (for a few graphs), with up to $n$ seeds. It shows that computing a full permutation (with $n$ seeds) is not much more expensive than computing $n /1000$ (a few dozen) seeds. An interesting difference between these results and those for \tskim\ (reported in Figure~\ref{fig:tskim001}) is that for \alphaskim\ the running time grows less smoothly with the number of seeds. The discontinuities correspond to decreases in the sampling threshold $\tau$, causing additional sampling.

\subsubsection{Solution Quality}

Figure~\ref{fig:nips_and_error} (right) compares the quality of the
seed sets found by~\tskim~(for threshold decay) and \alphaskim~(for
exponential and harmonic decays) with those found by exact \textsc{Greedy}
on \instance{AstroPh} ($\ell=64$ simulations). We consider sets of size 1 to 10$^3$ and the same decay functions as above. Each point of the curve represents the error~(in percent) of our algorithm when compared to \textsc{Greedy}. We observe that the error is very low in general~(less than 1\% for exponential and harmonic decay, and less than 4\% for threshold). Considering the fact that \skim\ is many orders of magnitude faster than \textsc{Greedy}~(while still providing strong guarantees), these errors are acceptable. Note that the error of the first seed vertex is very low in all cases~(close to 0\%), indicating that \skim\ does very well in finding the most influential node.

The quality of the solutions provided by the algorithm with respect to 
the probabilistic input (graph distribution) depends on the number of instances (simulations) $\ell$. Our experiments so far have used $\ell = 64$. We now compare this with other choices of $\ell$. 
Figure~\ref{fig:ell} compares the quality of the seed sets found by {\sc Greedy} for \instance{AstroPh} for $\ell = 4, 16, 64, 128$ with those found by $\ell = 256$. We consider sets of size 1 to 50 and three different decay functions: exponential, harmonic, and threshold (with $T = 0.01$). Each point in the curve represents the error (in percent) relative to the solution with $\ell = 256$. Although the error is consistently high for the threshold IM when $\ell$ is very small, it becomes negligible for $\ell \geq 64$, justifying our choice of parameters. For smoother (exponential or harmonic) decay, all errors are significantly smaller, and even smaller values of $\ell$ would be acceptable.

\subsection{Distance-Based Influence Oracle}

We now evaluate our influence oracles. Recall that this setting has two stages. The preprocessing stage takes as input only the graph and computes sketches. The query stage takes a set~$S$ of seeds and a function~$\alpha$ and uses the sketches to estimate the influence of~$S$ with respect to~$\alpha$. Note that same preprocessing stage can be used to answer queries for any decay function~$\alpha$. For this experiment, we consider three such functions: exponential~($\alpha(x) = e^{-10x}$), harmonic~($\alpha(x) = 1 / (10x + 1)$), and threshold (with $T = 0.01$).

Table~\ref{tab:toracle} summarizes our results in this setting. For each dataset tested, it first shows the preprocessing time and the total space required to store all sketches. Then, for each decay function, we report the query time (in microseconds) and the estimation error for random sets $S$ of sizes 1, 50, and 1000. (Note that measuring the error requires computing exact influence of each seed set with multiple Dijkstra searches; this time is not included in the table.) Each entry in the table is the average of 100 random seed sets. 

The table shows that, as predicted, query times are almost independent of the $\alpha$ function, the size of the influenced set, and the size of the graph. Moreover, they have a slightly superlinear dependence on the number of seeds. Queries are somewhat slower than for reachability-based IC~(as reported in~\cite{binaryinfluence:CIKM2014}), since sketches are bigger and the estimator is more involved. Our oracles are much more flexible, however, and still practical. For 50 seeds, one can answer queries in a few milliseconds, whereas an exact computation could take minutes or more on large graphs. Moreover, its error is consistently low, regardless of the number of seeds.

\Xcomment{
\textbf{TODO: This is old stuff!}
\textsl{
This section evaluates \skim, our new sketch-based influence maximization algorithm. By default we set the number of sampled instances to~$\ell = 64$ and compute sketches with~$k = 64$ entries. (These choices will be justified in later experiments.) To evaluate the actual influence values of the seeds computed by \skim, we use a set of 512 \emph{different} sampled instances, in which we simply run BFSes a posteriori.}

\textsl{Finally, we evaluate the REL IC model. In this experiment, we set~$k=64$, $\ell = 64$, do not recompute sketches, and use a harmonic decay function~($\alpha(x) = 1 / (x+1)$). Table~\ref{tab:rel-ic} reports the solution quality and running time of our algorithm for 50 and 500 seed nodes. Our algorithm still scales well, however, when compared to the reachability-based IC model, we lose about 1 to 2 orders of magnitude in running time. This is not surprising, since cADS are bigger and more expensive to compute than reachability-based sketches. }
}


\Xcomment{
\begin{figure*}
\centering 
\includegraphics[page=1]{plots/adsparameters-influence}\hfill%
\includegraphics[page=3]{plots/adsparameters-influence}\hfill%
\includegraphics[page=4]{plots/adsparameters-influence}\\%

\smallskip 

\includegraphics[page=1]{plots/adsparameters-time}\hfill%
\includegraphics[page=3]{plots/adsparameters-time}\hfill%
\includegraphics[page=4]{plots/adsparameters-time}\\%
\caption{Evaluating influence and running time for several algorithms on the REL IC model. The legend applies to all plots.}
\label{fig:rel-ic}
\end{figure*}}

\ignore{
\section{Relation to prior work} \label{related:sec}

Our approach builds on and generalizes our recent sketch-based influence 
oracle and IM algorithm (\skim) for reachability-based influence~\cite{binaryinfluence:CIKM2014}.  
The distance-based model, and in particular, scalable IM 
with general decay functions required many 
additional insights and novel techniques. 

The threshold influence model was 
  introduced in \cite{Gomez-RodriguezBS:ICML2011} and algorithms for influence 
  oracle and IM were recently developed~\cite{DSGZ:nips2013}, also based on 
  All-Distances Sketches~\cite{ECohen6f}. We obtain orders of 
  magnitude improvements in scalability for these problems. 
Our improved oracle is made  possible by generalizing 
the sketches themselves to apply to multiple instances, without paying 
storage overhead for the number of instances and also by careful 
applications of state-of-the-art estimators.  We also provide a leaner 
oracle, with smaller storage and faster preprocessing when the 
threshold is fixed.  Our threshold IM algorithm \tskim\ only computes sketches to the point needed to approximate the greedy selection, and is thus much more efficient. 

  Our distance-based influence model generalizes distance-decaying closeness 
  centrality~\cite{CoKa:jcss07,Dangalchev:2006,Opsahl:2010,BoldiVigna:IM2014,ECohenADS:PODS2014}
  to be with respect to multiple seed nodes and multiple instances (or a probabilistic model). 
Oracles for closeness,  also based on the 
sketching techniques of Cohen~\cite{ECohen6f}, were developed for general 
decay function \cite{CoKa:jcss07,ECohenADS:PODS2014} and for 
unweighted graphs \cite{BoldiVignaHyperball:arxiv2014}. 
Our distance-based influence oracle generalizes the state-of-the-art design 
\cite{ECohenADS:PODS2014} from centrality to distance-based influence. 
As far as we know, IM has not previously been considered in the context of centrality (single static 
graph).  Our distance-based IM algorithm, \alphaskim, is the first scalable 
solution also for a static network, both from theoretical and 
practical perspectives.  
}
\section{Conclusion} \label{conclu:sec}
We formulate and study a 
model of {\em distance-based influence} which captures decay of
relevance with distance, and 
unifies and extends several established research threads.
We design novel near-linear algorithms for greedy influence
maximization and a construction of influence oracles.
 In future, we hope to apply our novel
weighted-sampling approach for approximate greedy influence
maximization to scale up other submodular maximization problems.
\ignore{
We provide the first 
scalable algorithms for influence queries and influence
maximization.
Our approach is the first to work with natural smooth decay functions
including exponential and polynomial decay.  For the threshold model,
which was previously studied, we design much more scalable algorithms.
Our algorithms provide statistical guarantees on approximation
quality, theoretical near-linear worst-case bounds on running time,
and are demonstrated to scale to large networks.
}
\paragraph*{Acknowledgements} We would like to thank the authors of~\cite{DSGZ:nips2013} for helping us reproduce their inputs and pointing us to their implementation of ConTinEst.

{\small
\bibliographystyle{plain}
\bibliography{cycle,dblp} 
}

\onlyinproc{\end{document}}
\clearpage
\appendix

\section{Running time analysis for \tskim} \label{tskimtime:sec}

The computation is dominated by two 
components.  The first is the reversed pruned Dijkstra searches which
build the inverted sketches.  The second is the forward pruned Dijkstra searches that we use to compute
and update the residual problem after a new seed node is selected.
 In both components we use pruned shortest-paths computations (whereas
 \skim\ for reachability-based influence uses generic searches). The bound on the number of
reversed edge traversals performed for sketch building and the
analysis deriving this bound are essentially the same as with
reachability-based \skim\ \cite{binaryinfluence:CIKM2014}:  We bound the total 
number of increments to the entries in the array $\Size$.  Each entry
in that array corresponds to a node and the size (number of entries)
in the partial sketch.  Recall that the sketches themselves are maintained in an inverted structure.

 The number of forward traversals from $u$ in instance $i$ corresponds to the number of times 
the distance $\Covered{u,i}$ of node $u$ in instance $i$ from the seed
set is updated.
 As we discussed in Section \ref{approxgreedy:sec}
the worst-case number of updates can depend on the number
of seeds with exact distance-based greedy.  The approximate selection of
\tskim, however, is such that all seed nodes with marginal influence
that is close to the maximum one (within $k^{-0.5}$) have
approximately the same probability of being selected.  This means that
we can invoke Theorem \ref{randupdatebound:thm} to obtain an expected
$O(\epsilon^{-1}\log^2 n)$ bound.

  We therefore obtain the following time bound for an exhaustive
  execution of \tskim\ (until influence of $S$ approaches $n$):
\begin{theorem}
\tskim\ guarantees with high
probability, for all $s$, that the first $s$ selected seeds have influence 
that is at least $1-(1-1/s)^s
-\epsilon$ times the maximum influence of a seed set of size $s$, and
runs in expected time
\begin{equation} \label{tskimtime:eq}
O(\epsilon^{-1}\sum_i 
|E^{(i)}|\log^3 n + m\epsilon^{-2} \log^3 n)\ ,
\end{equation}
 where $m\leq |\bigcup_i 
E^{(i)}|$ is the sum over nodes of the maximum in-degree over 
instances. 
\end{theorem}

 We note that this upper bounds the worst-case behavior.
  Our experiments verify that typically the computation is dominated
  by the reverse searches and the number of edge traversals is much smaller.
  We propose another  solution of theoretical interest  which provides worst-case robustness of
  the number of forward traversals at the cost of a slight
  softening of the sharp threshold.  This design
  was not implemented.

We first formalize the notion of a 
softened threshold $T$ in the influence function: For a
parameter 
$\nu\in (0,1)$, we only require approximation with respect to the
maximum influence of a seed set of size $s$ {\em when
the threshold is $(1-\nu)T$}.  Since we expect the value of the
threshold used in an application not be precise anyway, this relaxation
approximates the intended semantics of using a
threshold of $T$.

  We next outline a slight modification of \tskim\ which provides a bound
  on the running time and approximation quality with respect to the soft threshold.
 \begin{theorem}
Our modified \tskim, guarantees with high
probability, for all $s$ that the first $s$ selected seeds have influence for
threshold function $\alpha(x) = x\leq T \, ? \, 1
\,  : \, 0$ that is 
at least $1-(1-1/s)^s
-\epsilon$ times the maximum influence of a seed set of size $s$ with
respect to a threshold function $\alpha(x) = x\leq (1-\nu)T \, ? \, 1
\,  : \, 0$.
The algorithm runs in time
\begin{equation} \label{tskimsoft:eq}
O(\nu^{-1}\sum_i 
|E^{(i)}|\log n + m\epsilon^{-2} \log^3 n)\ ,
\end{equation}
 where $m\leq |\bigcup_i 
E^{(i)}|$ is the sum over nodes of the maximum in-degree over 
instances. 
\end{theorem}
\begin{proof}
We outline the modifications needed to support the soft threshold.

In the forward searches, we only propagate updates to $\Covered{v,i}$
when the decrease in distance  is at least $\nu T$.  In particular,
the reversed Dijkstras are always pruned at distance $T(1-\nu)$
and the forward Dijkstras are pruned when $d\leq T$ or $d>\Covered{v,i} -\nu T$. 
This means that at any given time, if $d^{(i)}_{Sv} \leq (1-\nu T)$
then $\Covered{v,i} \in [d^{(i)}_{Sv},d^{(i)}_{Sv}+\nu T]$.
That is, node-instance pairs of distance at most 
$(1-\nu)T$ are counted as influenced and pairs of distance greater than $T$ are not influenced, but other pairs can be counted either way and our estimation guarantee is with respect to a threshold 
of $(1-\nu)T$. 

 When discretizing to $\nu T$, we obtain a worst-case guarantee of 
$1/\nu$ on the number of updates of $\Covered{v,i}$ in each
node-instance pair.  In 
total, we obtain the claimed worst-case bound on the running time.
\end{proof}

\section{Optimal oracle estimator  } \label{oracleLstar:sec}

We show that 
the distance-based influence estimator presented in 
Algorithm~\ref{TIoracle:alg}
is an instance of the \lstar\ estimator of \cite{sorder:PODC2014}. 

We first explain the derivation of the estimator. 
Similarly to the 
 single node, this is a sum estimator applied to 
  each summand $(v,i)$ in Equation~\eqref{relinf:eq} to obtain an 
  unbiased estimate of 
$\alpha(\min_{u\in 
    S}d^{(i)}_{uv})$. 
Rank values $r$ that belong to pairs $(v,i)$ where $v\in S$ can be 
identified because they must appear as $(r,0)$ in $\cADS(v)$ when we 
use structured permutation ranks. 
For 
these pairs, we do not compute an estimate but simply add an exact contribution of $|S|\alpha(0)$ to the 
estimated influence. 

  The remaining rank values are those associated with pairs that have 
  a positive distance from $S$.  That is,
a pair $(v,i)$ so that  $d^{(i)}_{Sv}>0$. 
We would like to estimate $\alpha(\min_{u\in 
    S}d^{(i)}_{uv})$ for each such pair $(v,i)$. 
We apply the 
 \lstar\ estimator of Cohen~\cite{sorder:PODC2014}, which is applicable 
 to any monotone estimation problem (MEP) and is the unique admissible 
 (nonnegative unbiased) monotone estimator. 
  We first show that our estimation problem is a MEP. 
 To represent the problem as a MEP, we fix the rank values of all 
  pairs, and consider the outcome (presence in $\cADS(u)$ for $u\in 
  S$) as a function of the ``random seed'' $r^{(i)}_v$.  Clearly,
the lower $r^{(i)}_v$ is, the more information we have (a tighter 
upper bound) on $\alpha(d^{(i)}_{Sv})$.  Therefore, the problem is a MEP. 

  The information in the outcome is actually contained in the skyline 
\Skylines{$r^{(i)}_v$}
 derived from the set of occurrences of the rank 
 $r^{(i)}_v$ in the sketches of $S$. 
When the skyline for a pair $(v,i)$  is empty ($r^{(i)}_v$ is not included in any of the sketches of 
  nodes in $S$), the 
 \lstar\ estimate is $0$ and is not explicitly computed. 

Note that the estimator is applied to each 
  rank value $r^{(i)}_v\not\in Z$ (and does not require explicit knowledge of the corresponding pair 
  $(v,i)$). 

Since the L$^*$ estimator is admissible, it dominates the union 
estimator and in particular has CV at most $1/\sqrt{2(k-1)}$. 
Note that this is an upper bound on the CV; the worst case is $|S|=1$
or when sketches are very similar.  As with reachability sketches, we can expect the estimate quality to 
be up to a factor of $\sqrt{|S|}$ smaller when the ``coverage'' of different 
seed nodes are sufficiently different.  Similarly, Chernoff 
concentration bounds apply here:  If we use $k=\epsilon^{-2} c\ln n $, the relative error 
exceeds $\epsilon$ with probability $\leq 1/n^c$.  Therefore,
with high probability, we can be accurate on polynomially many queries.

\section{Running time of $\alpha$-SKIM} \label{askimtime:proofs}

 This section provides the proof of  Theorem \ref{alphatime:thm}.
We also provide a slightly tighter worst-case bound on the (expected) running time for
  (i) the important cases of polynomial or exponential decay (both are
  instances of $\alpha$ with
  nonpositive relative rate of change) and (ii) general decay
  functions when we consider approximation with respect to relaxed influence
  computed with slight perturbations in distances:
\begin{theorem} \label{alphatimemodified:thm}
A modified Algorithm \alphaskim\ runs in expected time 
$$O(\frac{\log^2 n}{\epsilon}\sum_{i=1}^\ell |E^{(i)}| + \frac{\log^2
  n}{\epsilon^3}n+ \frac{\log n}{\epsilon^2}|\bigcup_{i=1}^\ell E^{(i)}|)= O(\epsilon^{-3}\sum_{i=1}^\ell |E^{(i)}| \log^2 n )$$
providing the guarantee \eqref{approxguarantee} when $(\ln \alpha(x))' \leq 0$  ($\alpha$ has nonpositive relative
rate of change).
For a general decay function $\alpha$, we obtain with high probability
approximation with respect to  $\max_{U\mid |U|=s}
\tilde{\INF}(\mathcal{G},U)$, where $\tilde{\INF}(\mathcal{G},S)\equiv
\sum_{u\in V} \alpha((1+\epsilon) d_{Su})$.
\end{theorem}

  We first list the main components of the \alphaskim\ computation, the bounds we
  obtain on the computation of each component, and pointers to the proofs in the sequel.
The analysis assumes that $\lambda=0.5$.

\begin{itemize}
\item
The reversed Dijkstra runs when building the sketches, including
updating the sample and estimation data structures as new entries are
inserted.  Since we use efficient structures to identify which runs need to be
resumed and to update estimates and samples, this component is dominated by
the Dijkstra computations.  We obtain a bound of 
$O(k \log^2 n \sum_v \max_{i\in [\ell]} deg^{(i)}(v))$ on the number 
of operations (see Lemma \ref{reversecost:lemma}). 
\item
The forward Dijkstra runs which update the residual
problem after a seed node is selected.   We express the bound in terms
of the maximum number of times, which we denote by $X$, that
$\delta^{(i)}_v$ is decreased  for a pair
$(i,v)$.  We consider both bounding $X$ and its impact on the time
bound In Section \ref{deltadecbound:sec}.
We show that the forward Dijkstra runs take  $O(X \sum_{i\in [\ell]}
  |E^{(i)}| \log n)$ operations (see Lemma \ref{fdijkstracost:lemma}).  
We obtain a bound of $X=O(\epsilon^{-1} \log^2 n)$ using Theorem \ref{randupdatebound:thm}.
We also obtain a slightly tighter
bound of $X = O(\epsilon^{-1}\log n)$ for a slightly modified
algorithm.  The modified algorithm has the same approximation
guarantee (as in Theorem \ref{alphaaccuracy:thm}) when $\alpha$ has
nonpositive relative rate of change.  For general $\alpha$, the
guarantee is with respect to a relaxed condition.  In practice on realistic inputs, however, we
expect $X$ to be small and we observed good running times even without
the modifications.
\item
Updating the sample and estimation structures when entries are
reclassified down.  This happens after a seed is added when the
forward Dijkstra run from the seed in instance $i$ updates
$\delta^{(i)}_v$.  This means all samples which include a
  $(v,i)$ entry need to be updated.  Expressed in terms of $X$, this
  cost is $O(X nk \log n)$ (see Lemma \ref{movedowncost:lemma}).
\item
Updating the sampling and estimation structures when entries are reclassified
up.  This can happen after $\tau$ is decreased.   We obtain a bound of
$O(nk\log^2 n)$  (in Section \ref{thressecbounb}).
\end{itemize}

\subsection{The threshold $\tau$} \label{thressecbounb}
An immediate upper bound on the maximum influence of a node
is $n \alpha(0)$.  This means that we can safely initialize our
algorithm with
$\tau = (n\alpha(0)\ell )/k$ and have an
expected  initial sample sizes $O(k)$ for each node.

We next observe that we can safely terminate the algorithm when $\tau$
decreases to below $\tau \leq 
\epsilon \alpha(0)\ell/k$ and incur at most $O(\epsilon)$ contribution
to the error.
To establish that, first observe
that the threshold $\tau$ is decreased only when the maximum estimated marginal 
influence, over all nodes, is less than $\tau k$.   Therefore,
when $\tau \leq
\epsilon \alpha(0)\ell/k$l,  the maximum marginal
influence of all nodes is at most $\epsilon \alpha(0)$.  
Now observe that at this point, for all $s' \geq 0$,  if 
we add $s'$ additional seed nodes, the relative contribution of these nodes
is at most
$\leq (s'\epsilon \alpha(0)/((s+s')\alpha(0)) \leq \epsilon$ (The
denominator $(s+s')\alpha(0)$ is a lower bound on the
total influence of any seed set of size $s+s'$).

 Combining the start and end values of $\tau$, we obtain the following bound on the total number of times $\tau$ is decreased:
\begin{lemma} \label{taubound:lemma}
$\tau$ can decrease at most $\log_{1/\lambda} (n/\epsilon) =O(\log n)$ times. 
\end{lemma}

 When $\tau$ decreases, we have the property 
that the  PPS samples of all nodes are of 
size smaller than $k$ (otherwise we have an estimate that exceeds
$k\tau$).  

After $\tau$ is decreased, we extend the samples to be with respect to 
the new $\tau$.  We  show that the 
expected sample size remains $O(k)$:
 \begin{lemma} \label{samplesize:lemma}
For every node, the expected number of entries after a threshold 
decrease $\tau \gets \lambda \tau$  is at most $k/\lambda$, with good concentration. 
 \end{lemma}
\begin{proof}
 Equivalently, we consider the size of a PPS sample with threshold 
$\lambda \tau$ when the total weight of the set is at most $k \tau$. 
\end{proof}

We are now ready to bound the total number of
reclassification of sample entries.

\begin{lemma} \label{samplemembers:lemma}
The total number of reclassifications of entries in the sample is 
$O(nk\log n)$
\end{lemma}
\begin{proof}
An entry in $\Index{v,i}$ can only be reclassified down 3 times before
it is removed from the sample (from H to M,  M to L, or L to
removal), unless it is reclassified up.  An entry can be reclassified
up only  when $\tau$
decreases, which happens at most  $O(\log n)$ times.  Each decrease 
``resets'' at most $kn$ entries to class $H$: Either existing entries
reclassified up or at most new entries (since by Lemma
\ref{samplesize:lemma} expected sample sizes remain $O(k)$).  These entries can then be reclassified down at most 
$3$ times before they are eliminated from the sample.  So the total
number is $O(nk\log n)$.
\end{proof}

  We are now ready to bound the total work performed by updating the
  sample and estimation components by entries being reclassified up as
  a result of a $\tau$ decrease ($\MoveUp$ calls).  The 
  cost of each such call is proportional to the number of 
  reclassified entries. It also requires a call to the priority queue
  to efficiently find all inverted samples with at least one
  reclassified element.  In the worst case, the cost is
  $O(\log(n\ell)=O(\log n)$ times the number of reclassifications.
 In total using Lemma \ref{samplemembers:lemma}, we obtain a worst
 case bound of $O(nk\log^2 n)$ on the reclassification-up component of
 the computation.

\subsection{Bounding the reversed Dijkstra computations}

 We bound the expected total number of distinct entries (pairs
 $(v,i)$) that
were included in the PPS sample of a node $u$  at any point during the execution of
the algorithm.  

\begin{lemma} \label{samplemembers2:lemma}
For a node $v$, the number of distinct entries in the sample of $v$
during the execution of the algorithm is $O(k\log n)$ with good
concentration (of the upper bound).
\end{lemma}
\begin{proof}
Each decrease of $\tau$ introduces in expectation $O(k)$ new entries,
and there are $O(\log n)$ such decreases (Lemma \ref{taubound:lemma}).
\end{proof}

\ignore{
\begin{lemma} \label{samplemembers:lemma}
For a node $v$, the expected number of removals of entries from the 
sample (that remain removed after the next decrease of $\tau$) is 
$O(k\log n)$ with good concentration. 
\end{lemma}
\begin{proof}
All sampled entries $(v,i)$ in the sample of $u$ have contribution to
the estimator $\Delta^{(i)}_{uv}/r^{(i)}_v  \geq \tau$.  When $\delta^{(i)}_v$
decreases, so can $\Delta^{(i)}_{uv}$, the ratio can become smaller
than $\tau$, and the entry is removed from the active sample.  If
$\Delta^{(i)}_{uv}/r^{(i)}_v < \lambda \tau$,  the
entry is still not included even after $\tau$ decreases.  This means that its contribution decreased
by at least a factor of $\lambda$.   Since the marginal influence of
the node $u$ was
at most $k\tau$, it means that in expectation it decreased
by a factor of $(1-\lambda/ k)$.  So each entry that is in the sample
but is no longer in the sample after $\tau$ is decreased can be
``charged'' to a decrease (in expectation) of the total marginal influence of the node $u$.

   Using the stopping condition when $\tau \leq \epsilon
   \alpha(0)/(\ell k)$ (see the proof of Lemma~\ref{taubound:lemma}),
   the algorithm is stopped when the marginal influence of the node is
   below $\epsilon \alpha(0)$.

  The initial marginal influence of a node is at most $n\alpha(0)$.
  Therefore  there can be at most $O((k/\lambda) \log
  (n/\epsilon))=O(k\log n)$ deletions of entries from the sample that
  last to the next decrease of $\tau$.
\end{proof}
}

  We can now bound the work of the reverse Dijkstra runs used to
  construct the sketches
\begin{lemma}  \label{reversecost:lemma}
The number of operations performed by the reverse Dijkstra runs is 
$O(k \log^2 n \sum_v \max_{i\in [\ell]} deg^{(i)}(v))$
\end{lemma}
\begin{proof}
Each productive scan of a node $u$ by a reverse Dijkstra sourced at
$(v,i)$ (productive means that
the node was next on the Dijkstra state priority queue) means that the
entry $(v,i)$ is inserted into a PPS sample of $u$, updating the estimation structure accordingly.  This involves $O(\log n)$ operations in updating
priority queues in the state of the Dijkstra run, structures
maintaining the samples, and looking at all 
outgoing edges of the node in the transposed  instance $G^{(i)}$.

Each such scan can be charged to an entry inserted into a PPS sample.
From Lemma \ref{samplemembers2:lemma}, we obtain that each node, 
in expectation, can have $O(k\log n)$ such entries.  Therefore, the
node is scanned $O(k\log n)$ times.   
\end{proof}
We remark that if the instances are generated by an IC model, we can
replace $\max_{i\in [\ell]} deg^{(i)}(v)$  by the expected degree
$\E[deg(v)]$ and accordingly obtain the bound $O(k \log^2 n \E[|E|])$.

\subsection{Bounding the expected number of times $\delta^{(i)}_v$
  decreases for a certain pair $(v,i)$} \label{deltadecbound:sec}
We now bound the n umber of updates of $\delta^{(i)}_v$ performed as
seed are added when maintaining the
residual problem.

If we have a bound of $X$ on the number of updates per node-instance
pair, then
\begin{lemma}  \label{fdijkstracost:lemma}
The computation of the forward 
Dijkstra runs is $O(X \sum_{i\in [\ell]} |E^{(i)}| \log n)$. 
\end{lemma}
\begin{proof}
Each node-instance scan 
can be charged to a decrease of $\delta^{(i)}_v$.   
\end{proof}

  We also can express the total cost of the 
$\MoveDown{}$ calls by $X$.
\begin{lemma}  \label{movedowncost:lemma}
The total computation of all $\MoveDown$ calls is $$O(X nk \log n)\ .$$
\end{lemma}
\begin{proof}
 Each call to $\MoveDown$ for $(v,i)$ updates a value for $(v,i)$ in
 a priority queue (at $O(\log n)$ cost), which is of the order of the
 forward Dijkstra computation that generated the update of
 $\delta^{(i)}_v$.
Otherwise, the $\MoveDown$ call performs a
 number of operations that is 
linear in the number of active entries in $\Index{v,i}$
(entries that are in a sample of some node).  
In addition, 
$\MoveDown$ may also permanently discard entries at the tail of the $\Index{v,i}$,
 but the removal of these entries is charged to their insertion.

 It remains to bound the computation of $\MoveDown$ when processing
 active entries of $\Index{v,i}$.
Using Lemma \ref{samplemembers2:lemma}, there is a total of $O(nk\log n)$ entries that were active
in a sample at any point during the execution.  Each such entry can be
affected at most $X$ times.
\end{proof}

  We now bound $X$.  As argued in Section~\ref{thresh:sec}, we expect
  $X=O(\log n)$ on realistic instances.  Now noting that our
  sampled-based greedy selection satisfies the condition of Theorem
  \ref{randupdatebound:thm},
we obtain a bound of $X= O(\epsilon^{-2}\log^2 n)$.

  Here we propose modifications of the algorithm that allow us to
  obtain a slightly tighter bound on $X$ in interesting cases.  The first case covers all smooth
  decay functions that are exponential or slower:
\begin{lemma}
We can
modify \askim\ so that 
when $\alpha(x)$ has a nonpositive relative rate of change, that is,  $(\ln \alpha(x))' \geq 0$, then
$$X=O(\epsilon^{-1} \log n)\ .$$ The modification preserves the approximation ratio
stated in Theorem \ref{alphaaccuracy:thm}.
\end{lemma}
\begin{proof}
  The requirement  $(\ln \alpha(x))' \geq 0$ implies that 
 for all $x\geq 0$, $d \geq 0$, and $\Delta \geq 0$,
$$\frac{\alpha(d-\Delta)-\alpha(d)}{\alpha(d)} \geq 
\frac{\alpha(x+d-\Delta)-\alpha(d)}{\alpha(x+d)}\ .$$
This means that when we apply the following prune rules on forward updates of 
$\delta^{(i)}_v$:
We prune  at nodes where 
\begin{equation}\label{prunerule}
\frac{\alpha(d-\Delta)-\alpha(d)}{\alpha(d)} \leq \epsilon\ ,
\end{equation}
 where $d$ is 
the current value of $\delta^{(i)}_v$ and $d-\Delta$ is the updated value,
the condition \eqref{prunerule}  would actually hold for all nodes in instance $i$ reachable from 
$v$ via the Dijkstra search (since all these nodes have larger 
$\delta^{(i)}_v$.  

 The prune condition implies that for $(v,i)$ and all nodes Dijkstra 
 would have reached from the pruned one, the updated influence contribution 
by the better (closer) coverage is at most $\epsilon$ times the 
previous value.  So with this pruning, the influence of the seed set 
is captured with relative error of at most $\epsilon$. 

  We also observe that we can also always prune the Dijkstra 
  computations when the distance satisfies $\alpha(d) \leq
  \alpha(0)/n^2 \leq
 \epsilon \alpha(0)/n $.   

\ignore{
only to 
places where $\alpha(d) \geq (1+\epsilon)\alpha(\delta^{(i)}_v)$, this 
property holds to all pairs reachable from this propagation. 
We lose at most an $\epsilon$ factor on the coverage and limit the 
number of $\delta$ decreases per pair to $O(\epsilon^{-1}\log_2 
\epsilon^{-1})$. 
}

Combining, it means that with the prune rules, the total number of updates of 
$\delta^{(i)}_v$ per node-instance pair is $O(\epsilon^{-1} \log n)$. 
\end{proof}

 \begin{lemma}
We can modify the algorithm so that for any general decay function
$\alpha$, $X = O(\epsilon^{-1} \log n)$.  With the modification, we
obtain that with high probability,
$$\INF(\mathcal{G},S) \geq (1-(1-1/s)^s -\epsilon) \max_{ U \mid |U|=|S|}
\tilde{\INF}(\mathcal{G},U)\ ,$$
where $\tilde{\INF}(\mathcal{G},S)\equiv \sum_{u\in V} \alpha((1+\epsilon) d_{Su})$.
 \end{lemma}
\begin{proof}
We can apply a similar prune rule in the forward Dijkstra runs
  which updates only when the decrease to distance is at least 
  $\epsilon$ times the current distance. 
This would give us a bound on the number of updates,  but a weaker approximation 
  guarantee that holds with respect to a softened influence function
$\sum_{u\in V} \alpha((1+\epsilon)d_{Su})$.
\end{proof}

\section{Pseudocode}\label{app:ps}

\subsection{Functions for Distance-Based GREEDY}
\label{app:ps:funcgreedy}

This appendix contains the pseudocode of functions for our distance-based version of {\sc Greedy} from Section~\ref{model:sec}.

\begin{function} \caption{{MargGain}($u$): Marginal influence of $u$ }
\KwIn{Residual instance $(\mathcal{G},\delta)$ and node $u$}
\KwOut{$\INF((\mathcal{G},\delta),u)$}
$I_u \gets 0$ \tcp*{sum of marginal contributions} 
\ForEach{instance $i$}{Run Dijkstra from $u$ in 
  $G^{(i)}$, during which\\  \ForEach{visited node $v$ at distance
    $d$}{\lIf{$\alpha(d)=0$ or $d \geq \cdelta{v,i}$}{Prune}\lElse{$\increase{I_u}{\alpha(d)-\alpha(\cdelta{v,i})}$}
}}
\Return{$I_u/\ell$}
\end{function}
\begin{function}\caption{AddSeed($u$): Update $\delta$ according to
    $u$}
\KwIn{Residual instance $(\mathcal{G},\delta)$ and node $u$}
\ForEach{instance $i$}{Run Dijkstra from $u$ in 
  $G^{(i)}$, during which\\ \ForEach{visited node $v$ at distance $d$}{\lIf{$\alpha(d)=0$ or $d \geq \cdelta{v,i}$}{Prune}\lElse{$\cdelta{v,i} \gets d$}
}}
\end{function}

\newpage
\subsection{Algorithms for Threshold Model}
\label{app:ps:alg}

This appendix contains the pseudocode of algorithms for threshold
influence maximization (Section~\ref{thresh:sec})

\begin{algorithm2e}[!h]
\caption{Threshold IM (\tskim)} \label{Tskim:alg}
\DontPrintSemicolon

\KwIn{Directed graphs $\{G^{(i)}\}$, threshold $T$, parameter~$k$}
\KwOut{Sequence of node and marginal influence pairs}

\BlankLine

\tcp{Initialization}
\lForAll{node/instance pairs $(u,i)$}{$\Covered{u,i} \leftarrow \infty$}
\lForAll{nodes~$v$}{$\Size{v} \leftarrow 0$}
$\Index \leftarrow$ hash map of node-instance pairs to nodes\;
$\SeedList \leftarrow \perp$\tcp*{List of seeds \& marg.\ influences}
$\Rank \leftarrow 0$\;

\BlankLine
shuffle the $n\ell$ node-instance pairs~$(u,i)$\;
\BlankLine

\tcp{Compute seed nodes}
\While{$|\SeedList| < n$}{

\While(\tcp*[f]{Build sketches}){$\Rank < n\ell$}{
	$\Rank \leftarrow \Rank + 1$\;
	$(u,i)\leftarrow$ $\Rank$-th pair in shuffled sequence\;
	\BlankLine
    \lIf(\tcp*{Pair $(u,i)$ is covered}){$\Covered{u,i} < \infty$}{\Skip} 
    run Dijkstra from~$u$ in reverse graph~$G^{(i)}$, during which\;
    \ForEach{scanned node~$v$ in distance $d$}{
    	\lIf(\tcp*[f]{Prune at depth~$T$}){$d > T$}{\Prune}
		\BlankLine
      $\Size{v} \leftarrow \Size{v} + 1$\;
      $\Index{u,i} \leftarrow \Index{u,i} \cup \{v\}$\; \BlankLine
      \If{$\Size{v} = k$}{
        $x \leftarrow v$\tcp*{Next seed node}
        abort sketch building\;
      }
    }
}

\BlankLine
\If{all nodes $u$ have $\Size{u}<k$ }{
	$x   \leftarrow \argmax_{u \in V}{\Size{u}}$\;
}

\BlankLine
$I_x \leftarrow 0$\tcp*{The coverage of~$x$}
\ForAll(\tcp*[f]{Residual problem}){instances~$i$}{
	run Dijkstra from~$x$ in forward graph~$G^{(i)}$, during which\;
	\ForEach{scanned node~$v$ in distance $d$}{
		\lIf{$\Covered{v,i} \leq d$ \Or $d > T$}{\Prune}
		\lIf{$\Covered{v,i} = \infty$}{$I_x \leftarrow I_x + 1$}
		$\Covered{v,i} \leftarrow d$\;
		\BlankLine
		\ForAll{nodes $w$ in $\Index{v,i}$}{
			$\Size{w} \leftarrow \Size{w} - 1$\;
		}
		$\Index[v,i] \leftarrow \bot$\tcp*{Erase $(v,i)$ from $\Index$}
	}
}
\BlankLine
\SeedList.\Append{$x$, $I_x/\ell$}\;
}
\Return{\SeedList}\;
\end{algorithm2e}

\newpage
\subsection{Algorithms for Distance-Based Influence Oracle} \label{timedoracle:ps:alg}
This appendix contains the pseudocode for our 
distance-based influence oracle (Section~\ref{timedQ:sec}).

\begin{algorithm2e}[h]
\caption{Distance-Based Influence Oracle}\label{TIoracle:alg}
\DontPrintSemicolon 

\KwIn{Seed set~$S$, function~$\alpha$, sketches~$\cADS(u)$ for~$u\in S$}
\KwOut{Estimated influence for~$S$}
\BlankLine 
\tcp{Remember ranks who have distance zero in at least one sketch with respect to~$S$}
$\Zeroranks \leftarrow$ empty set (e.g.~hash map) of ranks\;
\ForAll{nodes~$u \in S$}{
	\ForEach{entry~$(r,d) \in \cADS(u)$ with~$d = 0$}{
		$\Zeroranks.\Insert{$r$}$\;
	}
}

\BlankLine 

\tcp{Build for each appearing rank a set of threshold rank/influence pairs}
$\Skylines \leftarrow$ new hash map from rank to array of pairs\;
\ForAll{nodes~$u \in S$}{
	\Queue $\leftarrow$ new max-heap of~$k$ smallest rank values\;
	\ForEach{entry~$(r,d) \in \cADS(u)$}{
		\eIf{$|\Queue| < k$}{
			\lIf{$r \not\in \Zeroranks$}{$\Skylines{$r$}.\Append{$(1.0, \alpha(d)$}$}
			$\Queue.\Insert{$r$}$\;
		}{ 
			\If{$r \not\in \Zeroranks$}{$\Skylines{$r$}.\Append{$(\Queue.\Maxelement{}, \alpha(d)$}$}
				$\Queue.\Insert{$r$}$\;
				$\Queue.\Deletemax{$r$}$\;
		}
	}
}

\BlankLine 

\tcp{Eliminate dominated entries}
\ForAll{ranks~$r \in \Skylines$}{
	\tcp{Sort by threshold rank in decreasing order. Break ties by
        decreasing $\alpha$}
	$\Sort{$\Skylines{$r$}$}$\;
	$\alpha^* \leftarrow 0$\;
	\ForAll{thresh.~rank/infl.~pairs~$(\tau, \alpha) \in \Skylines{$r$}$}{
		\lIf{$\alpha < \alpha^*$}{$\Skylines{$r$}.\Erase{$(\tau,\alpha)$}$}
		\lElse{$\alpha^* \leftarrow \alpha$}
	}
}

\BlankLine 

\tcp{This calls the L\textsuperscript{*} estimator for each skyline}
\Return $|S|\cdot\alpha(0) + (1/\ell)\cdot\sum_{r \in \Skylines}\Lstar{$\Skylines{$r$}$}$\;

\end{algorithm2e}

\begin{algorithm2e}[h]
\caption{\lstar estimator applied to a sorted skyline}
\label{Lest:alg}
\DontPrintSemicolon 
\KwIn{A sorted skyline~$\Skylines{r}=\{(\tau_j,\alpha_j)\}$}
\KwOut{$\lstar(\Skylines{r})$}

\BlankLine 

$S \gets 0$;\,   $x \gets 0$\;
\For{$i = 1, \ldots, |\Skylines{r}|$}{
	$x \gets (\alpha_i-S)/\tau_i$\tcp*{Note that $x$ is overwritten}
	\lIf{$i < |\Skylines{r}|$}{$S \gets S + x\cdot(\tau_{i}-\tau_{i+1})$}
}
\Return $x$\;
\end{algorithm2e}

\begin{algorithm2e}[h]
\caption{Combine rank-distance lists}\label{ADScomb:alg}
\DontPrintSemicolon
\KwIn{Two rank-distance lists~$A_1$ and~$A_2$}
\KwOut{Combined all-distance sketch~$\Ads_c$}
\BlankLine
$\ADS_c \leftarrow$ new (empty) ADS\;
\tcp{Merge sketches by increasing distance, breaking ties by increasing rank}
$\Tempsketch \leftarrow \Merge{$A_1$, $A_2$}$\;
\BlankLine
$\Numzero \leftarrow 0$\;
\Queue $\leftarrow$ new max-heap of~$k$ smallest rank values\;
\BlankLine
\tcp{Handle entries with distance 0}
\ForEach{entry~$(r,d) \in \Tempsketch$ with~$d=0$}{
\lIf{$\Numzero < k$}{$\ADS_c(u).\Pushback{$(r,d)$}$}
$\Queue.\Insert{$r$}$\;
\lIf{$|\Queue| > k$}{$\Queue.\Deletemax{}$}
$\Numzero \leftarrow \Numzero + 1$\;
}
\BlankLine
\tcp{Handle the rest of the entries}
\ForEach{entry~$(r,d) \in \Tempsketch$ with~$d > 0$}{
	\If{$|\Queue| < k$ or $r < \Queue.\Maxelement{}$}{
		$\ADS_c(u).\Pushback{$(r,d)$}$\;
		$\Queue.\Insert{$r$}$\;
		\lIf{$|\Queue| > k$}{$\Queue.\Deletemax{}$}
	}
}
\BlankLine
\Return $\Ads_c$\;
\end{algorithm2e}

\clearpage

\subsection{Functions for $\alpha$-SKIM}
\label{app:ps:funcalpha}

This appendix contains the subroutines of  Algorithm~\ref{alphaskim:alg} (\alphaskim), our fast algorithm for distance-based influence maximization.

\begin{function}\caption{NextSeed()}
\KwOut{The node $u$ which maximizes $\est.H[u]+\tau \est.M[u]$, if 
  happy with estimate. 
Otherwise $\perp$.}
\SetKw{True}{true}
\SetKw{False}{false}
\While{\True}{
\lIf{max priority in $\Qcands < k\tau$}{\Return{$\perp$}}
\Else{
 Remove maximum priority $u$ from $\Qcands$\;
$\hat{I}_u \gets \est.H[u] + \tau \est.M[u]$\;
\If{$\hat{I}_u \geq k\tau$ and $\hat{I}_u \geq $  $\max$ in 
  $\Qcands$}{$I_u\gets \MargGain{u}$;\\ \lIf{$I_u \geq  (1-1/\sqrt{k})\hat{I}_u$}{\Return{$(u,\hat{I}_u)$}}\Else{Place $u$ with priority $I_u$ in 
    $\Qcands$\; \Return{$\perp$}}}
\Else{Place $u$ with priority $\hat{I}_u$ in $\Qcands$}
}}
\end{function}

\begin{function}\caption{MoveUp()  Update estimates after
    decrreasing $\tau$}
\DontPrintSemicolon
\ForEach{ $(v,i)$ in $\Qhml$ with priority $\geq \tau$}{ delete
  $(v,i)$ from $\Qhml$\;
\tcp{Process $\Index{v,i}$}
\If(\tcp*[f]{move entries from M/L to H}){$\HM{v,i}\not=\perp$}{
  \While{$\HM{v,i} < |\Index{v,i}|$ and 
    $(u,d) \gets \Index{v,i}[\HM{v,i}]$ satisfies 
    $(c\gets \alpha(d)-\alpha(\delta^{(i)}_v)) \geq \tau$}{
$\increase{\est.H[u]}{c}$ \\
 \If(\tcp*[f]{Entry was M}){$\ML{v,i} =\perp$ or $\ML{v,i} > \HM{v,i}$}{$\decrease{\est.M[u]}{1}$} 
  $\increase{\HM{v,i}}{1}$}
\If{$\ML{v,i}\not=\perp$ and $\ML{v,i} < \HM{v,i}$}{$\ML{v,i} \gets \HM{v,i}$}
\If{$\HM{v,i} \geq |\Index{v,i}|$}{$\HM{v,i}\gets\perp$;
  $\ML{v,i}\gets\perp$}
}
\If(\tcp*[f]{Move from L to M}){$\ML{v,i} \not= \perp$}
{
  \While{$\ML{v,i} < |\Index{v,i}|$ and 
    $(u,d) \gets \Index{v,i}[\ML{v,i}]$ satisfies 
    $\alpha(d)-\alpha(\delta^{(i)}_v) \geq
    r^{(i)}_v\tau$}{$\increase{\ML{v,i}}{1}$; $\increase{\est.M[u]}{1}$}
\lIf{$\ML{v,i} \geq |\Index{v,i}|$}{$\ML{v,i}\gets\perp$}
}
$\UpdateReclassThresh{v,i}$ \tcp*{update $\Qhml$}
}
\end{function}

\begin{function}\caption{UpdateReclassThresh()$(v,i)$}
\KwOut{Update priority of $(v,i)$  in $\Qhml$}
$c\gets 0$;\\
\If{$\HM{v,i}\not=\perp$}{$(u,d) \gets \Index{v,i}[\HM{v,i}]$;
  $c\gets \alpha(d)-\alpha(\delta^{(i)}_v)$}
\If{$\ML{v,i}\not=\perp$}{$(u,d) \gets \Index{v,i}[\ML{v,i}]$;
  $c\gets \max\{c, (\alpha(d)-\alpha(\delta^{(i)}_v))/r^{(i)}_v\}$}
\If{$c>0$}{update priority of $(v,i)$ in $\Qhml$ to $c$} 
\end{function}

\begin{function}\caption{MoveDown() $((v,i),\delta_0,\delta_t)$}
\KwOut{Update estimation
    components for $(v,i)$  when
    $\delta^{(i)}_v$ decreases from $\delta_0$ to $\delta_t$}
\DontPrintSemicolon
$j \gets 0$; $t \gets \perp$; $\HM{v,i}\gets \perp$; \\ 
$z \gets
|\Index{v,i}|-1$; \lIf{$\ML{v,i} \not= \perp$}{$z \gets\ML{v,i}$}
$\ML{v,i}\gets
\perp$ \\
\While{$j \leq z$}{
$(u,d) \gets \Index{v,i}[j]$;\\
 \If(\tcp*[f]{entry was H}){$\alpha(d)-\alpha(\delta_0) \geq \tau$}
 {$\decrease{\est.H[u]}{\alpha(d)-\alpha(\delta_0)}$;\\ 
  \If(\tcp*[f]{is H}){$\alpha(d)-\alpha(\delta_t) \geq
    \tau$}{$\increase{\est.H[u]}{\alpha(d)-\alpha(\delta_t)}$}
 \ElseIf(\tcp*[f]{is M}){$\alpha(d)-\alpha(\delta_t) \geq r^{(i)}_v 
    \tau$}{$\increase{\est.M[u]}{1}$; \lIf{$\HM{v,i}=\perp$}{$\HM{v,i}=j$}}
  \ElseIf(\tcp*[f]{truncate}){$\alpha(d)
      \leq
      \alpha(\delta_t)$}{\lIf{$t=\perp$}{$t=j$}}
  \Else(\tcp*[f]{is L}){\lIf{$\ML{v,i}=\perp$}{$\ML{v,i}
        \gets j$}}
 }
\ElseIf(\tcp*[f]{entry was M}){$\alpha(d)-\alpha(\delta_0) \geq r^{(i)}_v
  \tau$}
{
\If(\tcp*[f]{is M}){$\alpha(d)-\alpha(\delta_t) \geq
    r^{(i)}_v \tau$}{\lIf{$\HM{v,i}=\perp$}{$\HM{v,i} \gets j$}}
 \Else(\tcp*[f]{is not M}){$\decrease{\est.M[u]}{1}$;\\
  \eIf(\tcp*[f]{truncate}){$\alpha(d)\leq\alpha(\delta_t)$}{\lIf{$t=\perp$}{$t=j$}}(\tcp*[f]{is
    L}){\lIf{$\ML{v,i}=\perp$}{$\ML{v,i}\gets j$}}
}}
 $\increase{j}{1}$
}
\lIf{$t\not=\perp$}{truncate $\Index{v,i}$ from $t$ on.}
\Else(\tcp*[f]{clean tail}){$t\gets |\Index{v,i}|-1$; \\
\While{$t \geq 0$ and $(u,d) \gets \Index{v,i}[t]$
    has $\alpha(d) \leq \alpha(\delta_t)$}{$\decrease{t}{1}$}
truncate $\Index{v,i}$ at position $t+1$ on}
Remove pair $(v,i)$ from $\Qhml$\;
$\UpdateReclassThresh{v,i}$ \tcp*{Update $\Qhml$}
\end{function}

\end{document}